\title{A Simple Dynamization of Trapezoidal Point Location in Planar Subdivisions}
\author{Milutin Brankovic}{University of Sydney, Australia}{mbra7655@uni.sydney.edu.au}{}{}
\author{Nikola Grujic}{University of Sydney, Australia}{ngru0072@uni.sydney.edu.au}{}{}
\author{Andr\'e van Renssen}{University of Sydney, Australia}{andre.vanrenssen@sydney.edu.au}{https://orcid.org/0000-0002-9294-9947}{}
\author{Martin P. Seybold}{University of Sydney, Australia}{martin.seybold@sydney.edu.au}{https://orcid.org/0000-0001-6901-3035}{}
\authorrunning{M. Brankovic, N. Grujic, A. van Renssen, M. P. Seybold}
\keywords{Dynamization, Trapezoidal Search Tree, Trapezoidal Search DAG, Backward Analysis, Point Location, Planar Subdivision, Treap}
\newcommand{\R}{\mathbb{R}}
\newcommand{\Q}{\mathbb{Q}}
\renewcommand{\O}{\mathcal{O}}
\newcommand{\A}{\mathcal{A}}
\newcommand{\T}{\mathcal{T}}
\newcommand{\D}{\mathcal{D}}
\newtheorem{obs}{Observation}
\newcommand{\avg}{\displaystyle \mathop{\mathbb{E}}}
\newcommand{\Perm}{\mathbf{P}}
\newcommand{\partition}{\operatorname{Partition}}
\newcommand{\merge}{\operatorname{Merge}}
\newcommand{\vPart}{\operatorname{V-Partition}}
\newcommand{\vMerge}{\operatorname{V-Merge}}
\newcommand{\leafIns}{\operatorname{Leaf-Insert}}
\newcommand{\subtrees}{\operatorname{Descend}}
\begin{document}
\maketitle
\begin{abstract} 
	We study how to dynamize the Trapezoidal Search Tree (TST) -- a well known randomized point location structure for planar subdivisions of kinetic line segments.

	Our approach naturally extends incremental leaf-level insertions to recursive methods and allows adaptation for the online setting.
	Moreover, the dynamization carries over to the Trapezoidal Search DAG (TSD), offering a linear sized data structure with logarithmic point location costs as a by-product.
	On a set $S$ of non-crossing segments, each update performs expected $\O(\log^2|S|)$ operations.

	We demonstrate the practicality of our method with an open-source implementation, based on the Computational Geometry Algorithms Library, and experiments on the update performance.
\end{abstract}

\section{Motivation}
Our search for a simple data structure, with guarantees, for practical use is motivated by applications that seek to built the intersection graph of thin geometric objects.
Already $n$ line segments in the plane can have $k=\Theta(n^2)$ intersections (e.g. grid placement) and a naïve algorithm, checking all pairs of objects for intersections, has optimal worst-case time complexity with very little need for additional memory space.
We focus on reporting problems for instances with fewer, $k \in o(n^2)$, intersections.
%
Creating intersection graphs is a natural precursor in meshing fracture networks~\cite{Hobe18}.
Simple examples of such networks are sets of plane, simple polygons in $\R^3$ with bounded complexity (e.g. rectangles). 
The classic reporting problems for axis orthogonal rectangles are well understood and solvable with linear sized structures with $\O(n\log n+k)$ operations for $\R^2$~\cite{BentleyW80} and with $\O(n\log^2 n+k)$ operations for $\R^3$~\cite{ChazelleI84, EdelsbrunnerO85}.
Rectangles with orientation however seem to be out of reach for these search tree approaches.
Sweeping-plane approaches reduce the intersection tests for a newly encountered object to those within a list of currently active objects.
A naïve-sweep succeeds with $\O(n \log n + n\cdot m)$ operations, for $m \leq n$ bounding the size of any active list occurring in the sweeping direction.
However, for a $m \in \omega(\sqrt{n})$ many futile intersection tests may be performed.

This note studies how to structure line segment data in such sweeping planes.
That is, we seek to provide a fully dynamic data structure for kinetic line segments in a plane that supports intersection reporting by means of ray-shooting.
Such data structures are also of interest for map matching applications
\cite{Seybold17} or for applications in need of a point location structure to maintain a constrained Delaunay triangulation~\cite{Bahrdt17}.
%

\subsection{Related Work}
Planar subdivisions are often categorized by additional geometric properties on the subdivisions' faces.
Authors frequently consider cases where all faces are convex, or all faces intersect horizontal lines at most twice (monotone faces).
Clearly, monotone faces and faces' boundaries formed by monotone curves (e.g. line segments) are distinct properties.
In particular, general planar subdivisions of graphs with straight-line embedding have monotone face boundaries.

Trapezoids are a key ingredient to simplify the study of geometric problems concerning sets of line segments $S$.
We use three well known \cite{AgarwalEG98, BergCompGeo, Mulmuley90, Seidel91} and closely related \cite{Hemmer16} structures.
These are
(i)   the trapezoidal decomposition $\A(S)$ of a plane induced by $S$, 
(ii)  the Trapezoidal Search Tree (TST), and 
(iii) the Trapezoidal Search DAG (TSD)
(c.f. Section~\ref{sec:defs}).
TSTs $\T(S, \pi)$ and TSDs $\D(S,\pi)$ stem from deterministic algorithms, that incrementally insert the segments in $S$ according to a, typically random, permutation $\pi$ over $S$.
Decomposition $\A(S)$ has size $\O(|S|+k)$, where $k$ denotes the number of \emph{crossing pairs} of segments in $S$.
Since segment boundaries of a general planar subdivision are non-crossing ($k=0$), the refined decomposition $\A(S)$ of the boundary segments $S$ is within a constant of the subdivision's size, regardless of the face shape.

Chiang and Tamassia's 1992 survey~\cite[Chapter~6]{Tamassia92survey} reviews several dynamizations for planar subdivisions.
In~\cite{ChiangT92}, the two authors describe a dynamization of the trapezoidal decomposition $\A$ of monotone subdivisions.
Using several tree structures, they achieve $\O(\log |S|)$ point location query time within an $\O(|S| \log |S|)$ size structure, which allows fully-dynamic updates in amortized $\O(\log^2 |S|)$ operations.
Also for monotone subdivisions, Goodrich and Tamassia show with~\cite{GoodrichT98} how to maintain interlaced, monotone spanning trees over the edges of the planar graph and its dual graph.
This leads to an $\O(|S|)$ sized structure with an update time of $O(\log |S| + m)$, for insertion of a monotone chain of length $m$, but point location queries take $\O(\log^2 |S|)$ operations.

Most recent approaches that achieve \emph{online, fully-dynamic updates} within $\O(|S|)$ size and $\O(\log|S|)$ query time are extensions of the works of Baumgarten et al.~\cite{BaumgartenJM94} based on dynamic fractional cascading.
Arge et al.~\cite{ArgeBG06} showed that raising the fan-out of tree nodes to $\Theta(\log^\varepsilon |S|)$ allows to lower the point location query time bound to $\O(\log |S|)$.
However, insertions take $\O(\log^{1+\varepsilon}|S|)$ amortized time and deletions take $\O(\log^{2+\varepsilon} |S|)$ amortized time per update.
Chan and Nekrich~\cite{ChanN15} added Multi-Colored Segment Trees to the approach and applied de-randomization and de-amortization techniques to finally derive update bounds of $\O(\log^{1+\varepsilon}|S|)$ for both, insertions and deletions.
To our knowledge, these are currently the best bounds on the pointer machine.
Refining this approach, Munro and Nekrich~\cite{MunroN19} recently analyzed the I/O counts in an external memory model with sufficiently large block sizes of $\Omega(\log^8 n)$.

The \emph{Randomized Incremental Construction (RIC)} of TSDs provides simple, and thereby practical, point location structures of expected $\O(|S|)$ size whose longest search path is with high probability also in $\O(\log|S|)$.
Their analysis is a celebrated result of Mulmuley~\cite{Mulmuley90}, who argues on a random experiment that draws geometric objects, and Seidel~\cite{Seidel91}, who uses a backward-analysis to bound the expected costs.
Vastly unmentioned is Mulmuley's book~\cite{MulmuleyBook}, which extends TSDs to a specialized (offline) dynamic setting of a random insert/delete string for a fixed set of segments $S$, in which each element has a probability of $1/|S|$ to appear in each update request.
The approach thrives on a randomly chosen, fixed order on $S$, rotations in the structure, and regular leaf-level insertions and deletions.
Several distributed book chapters %
finally show that executing such a (+/-)-sequence of $m$ updates has a total expected cost of $\O(m \log m + k \log m)$, where $k$ is the number of intersections in the fixed set $S$.
Schwarzkopf~\cite{Schwarzkopf91} independently describes a similar update model that also leads to a dynamization.
His analysis provides an expected $\O(\log^2 n)$ time for an update operation from the sequence, where $n$ denotes the current number of non-crossing segments, but the expected point location query bound is w.h.p. in $\O(\log^2 n)$.

In~\cite{AgarwalEG98}, Agarwal et al. describe a TST over a static set of segments, that move over time.
Their approach is based on a RIC of TSTs with expected $\O(|S|\log|S|+k)$ size whose depth is w.h.p in $\O(\log|S|)$.
The authors mention a bound on the expected construction time of $\O(|S|\log^2|S| +k\log|S|)$ and show that randomization is crucial to resolve a structural change, due to kinetic (adversarial) movement of the line segments, in expected $\O(\log|S|)$ operations.
One may however form constant velocity trajectories for (point-like) segments that lead to $\Omega( n\sqrt{n} )$ many topological updates for any binary space partition method~\cite{AgarwalBBGH00}.

Insightful work of Hemmer et al.~\cite{Hemmer16} recently revealed a certain bijection between the search paths of $\T(S,\pi)$ and the `un-folded' search paths of $\D(S,\pi)$ by means of a structural induction argument (along fixed $\pi$) on the two incremental algorithms.
The authors show Las Vegas type point location guarantees for the RIC of TSDs by bounding computation time of the length of a longest search path in a TSD by means of \emph{the size} of the related TST.
	
%

\subsection{Contribution}
Our surprisingly simple approach uses only one search structure and performs insertions and deletions directly on higher levels of it.
Since we maintain randomness of the priority orders, the expected size and point location query bounds are retained.

Section~\ref{sec:defs} reviews the basic RIC of TSTs and TSDs in a unified notation and we introduce our natural, recursive extensions of those basic primitives in Section~\ref{sec:algos}.
This enables fully-dynamic updates in both data structures in the offline setting, which we extend using Treaps to support efficient priority comparisons for the online setting (c.f. Section~\ref{sec:online-vs-offlie}).
Our algorithms are suitable for pointer machines (with arithmetic) since they only compare the spatial location of the input points, intersection points, and integer priority values. 
In particular, the method does not require indirect addressing of Random Access Memory.

Our analysis uses a geometric interpretation of TST nodes and their induced refinement of the neighborhood zones in certain trapezoidal decompositions $\A$.
Based on this, we perform a backward analysis to determine the cost of emplacing elements in a priority order.
This also allows us to bound the expected cost of simpler `segment' searches for affected search nodes in the structure.
For TSTs and TSDs, we obtain an expected update costs of $\O(\log^2 |S|)$ for insertion and deletion.
Hence randomization allows improvements on the amortized deletion bound of \cite{ArgeBG06}, without losing the size or point location query bounds.
For non-crossing segments, our analysis of dynamic TST updates matches the RIC asymptotic of $\O(|S|\log^2|S|)$.

We provide an open-source implementation for the dynamic TST based on the Computational Geometry Algorithms Library~\cite{CGAL} and present experimental data (c.f. Section~\ref{sec:exp}).
The observed update performance on non-crossing segments matches our analysis tightly and suggests small asymptotic constants.

\section{Basic Definitions, Algorithms and Properties} \label{sec:defs}
A segment $s=\overline{pq}$ between two distinct points 
$p,q \in \R^2$ is the set
$\{ p + \alpha (q-p) :\alpha \in [0,1]\}$ of points in the Euclidean plane.
Two segments are called disjoint if $s \cap s' = \emptyset$ and \emph{overlapping} if $s\cap s'$ contains more than one point.
For the case of one common point, we say that two segments \emph{meet} if it is an endpoint that is contained in the other segment and otherwise they are called \emph{intersecting} (or \emph{crossing}). 
The segment boundaries of a general planar subdivision for example have no intersections, but may well contain points in which arbitrarily many segments meet.

A line, parallel to the $y$-axis, through a point is called a \emph{vertical-cut} (e.g. through an end or intersection point) and the line through the endpoints of a segment is called an \emph{edge-cut}.
To avoid ambiguity, we denote the open halfspaces of a cut with $-$ and $+$.
For vertical-cuts $-$ denotes the one with lower $x$-coordinates and for edge-cuts $-$ denotes the one with the lower values of the $y$-axis.

As in \cite[Chapter 6]{BergCompGeo}, our presentation assumes that no two distinct points (end or intersection) have exactly the same $x$-coordinate, unless they are a common endpoint in which the segments meet.
As usual, this assumption is resolved by an implicit, infinitesimal shear transformation which translates to a tie-breaking rule in the geometric orientation predicate that resolves to comparing the  $y$-coordinates.
A simple extension of the halfspace partitioning for the remaining cases (of segments and cuts due to a segment) conceptually moves the points infinitesimally away from the degeneracy along their segment for a consistent decision.
To also assign co-linear cases consistently we finally resolve to lexicographic comparison (c.f. Figure~\ref{fig-tiebreaking}).
\begin{figure}
	\includegraphics[width=\linewidth]{./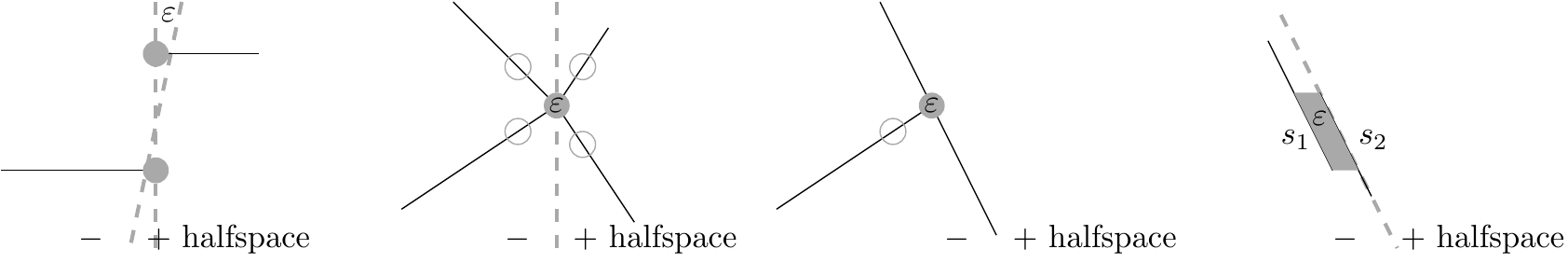}
	\caption{Standard shear transformation (left) and three refined tie-breaking rules for point/point meets, point/segment meets (geometric), and segment/segment overlaps (lexicographic).} \label{fig-tiebreaking}
\end{figure}
Moreover, our presentation assumes that the whole domain is bounded by a very large rectangle, which is also a trapezoid.
Extensions to unbounded problems are commonly achieved by careful case treatment of unbounded faces of arrangements (e.g. in \cite{CGAL}).

Our combinatorial analysis (c.f. Section~\ref{sec:runtime}) of the proposed dynamization of the TST requires more precise terms to establish a geometric correspondence, which we introduce now in a unified review of the well known incremental algorithms.
We frequently identify a permutation $\pi \in \Perm(S)$ over $S$ as bijection $\pi: S \to \{1,\ldots, |S|\}$ and call
$\pi(s)$ the \emph{priority} of a segment $s \in S$.

\subsection{Trapezoidal Search Trees $\T(S,\pi)$}\label{sec:def-tst}
A TST over a trapezoidal domain $\Delta \subseteq \R^2$ is a certain hierarchical Binary Space Partition with (vertical and edge) cuts induced by the segments in $S$.
These rooted trees are defined by inductively applying the deterministic insertion algorithm for the segments of $S$ in \emph{ascending priority} order.
Hence, the structure is uniquely determined by $(S,\pi)$.

Every node $v$ of the binary tree is associated with a trapezoidal region $\Delta(v) \subseteq \Delta $ and, on the empty set of segments, $\T(\{\},())$ contains only the root $r$ with $\Delta(r)=\Delta$.
In TSTs, every non-leaf node $v$ has two child nodes $v^-, v^+$ and stores a cut that signifies the halfspace partition of $\Delta(v)$ in $\Delta(v^-)$ and $\Delta(v^+)$, which enables point-location search descends.
We say that this cut \emph{destroys} the trapezoidal region $\Delta(v)$ and use this to extend the priority assignment to comprise also tree nodes.
We denote the priority of a node $v$ with $p(v)$.
More precisely, leafs have priority $+\infty$ and non-leaf nodes have the priority of the segment whose cut insertion destroys the node.
E.g. we have for tree nodes that the priorities of the cuts that constitute the ($3$ or $4$) boundaries of $\Delta(v)$ are always smaller than $p(v)$.

See Algorithm~\ref{alg:l-insert-TST} for the steps that are performed on leafs to insert a new segment $s$ in the TST structure $\T$.

\begin{algorithm}[h!]
\caption{$\leafIns(\T, s)$:} \label{alg:l-insert-TST}
\begin{enumerate}
    \item Search for the leaf nodes $L=\{u_1, \ldots, u_l\} \subseteq \T$ with $\Delta(u_i) \cap s \neq \emptyset$.
    \item Create refined slabs of these regions by vertically partitioning each $\Delta(u_i)$ with cuts due to endpoints of $s$ or intersection points of $s$ with edge-cuts bounding $\Delta(u_i)$.
    \item Partition the intersected slab regions further with the edge cut through $s$.
\end{enumerate}
\end{algorithm}

Note that nodes' priorities are monotonically increasing on paths from the root.
Moreover, modifications of $\leafIns$ on a leaf are independent from those on other leafs.
Procedure $\leafIns$ inserts $1,2$ or $3$ cuts on the region of a leaf and the edge-cut is always the last cut that is performed.
We refer to these patterns as vertical-vertical-edge (VVE), vertical-edge (VE), and edge (E) destruction.
To remove unnecessary ambiguity in the tree structure, we use the additional convention that in a VVE-destruction, the left of the two vertical cuts is inserted first (e.g. as parent of the right vertical cut).
See Figure~\ref{fig:example-leaf-insert-tst} for an example.

\begin{figure}\centering
	\includegraphics[width=.3\linewidth]{./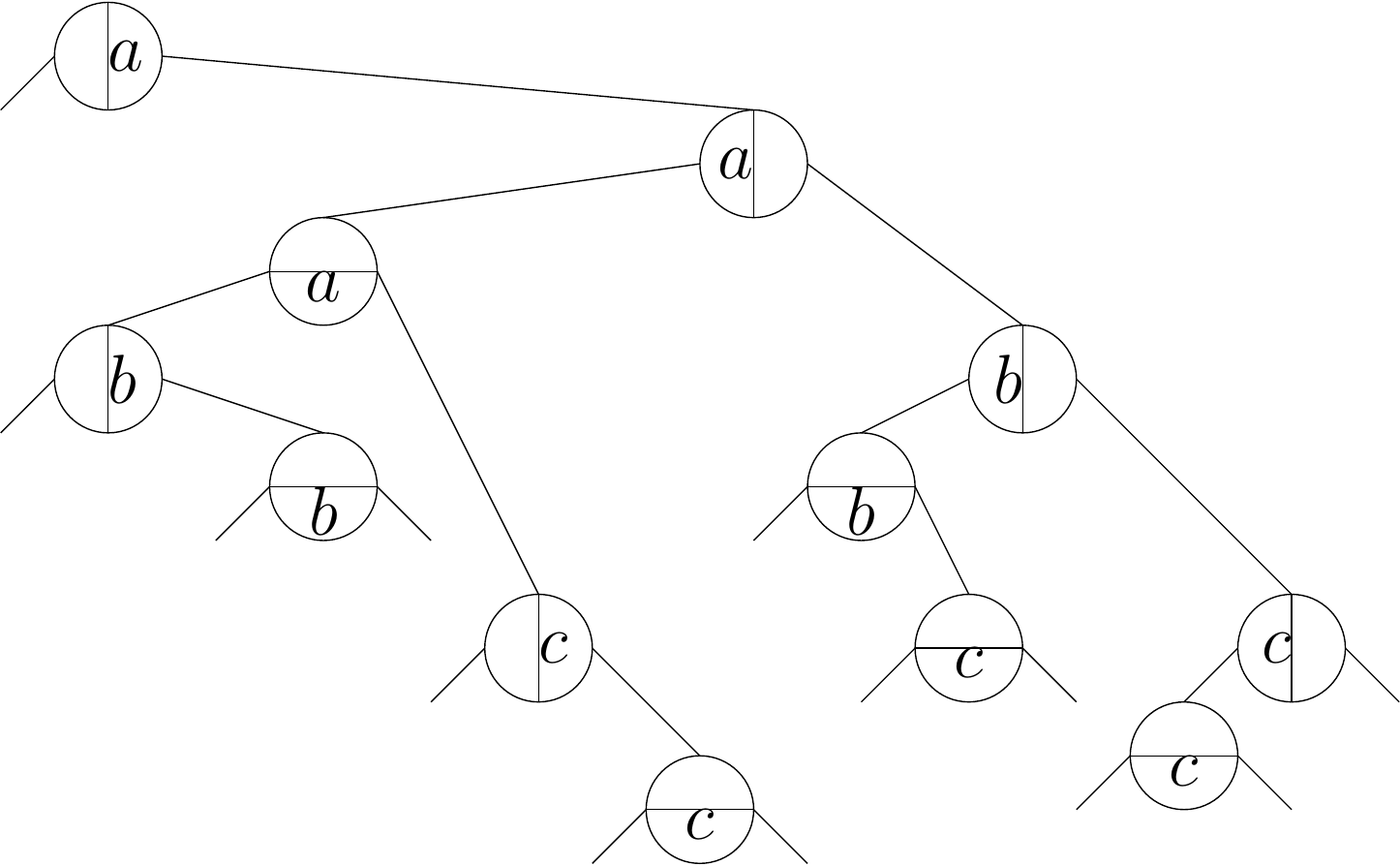}~
	\includegraphics[width=.35\linewidth]{./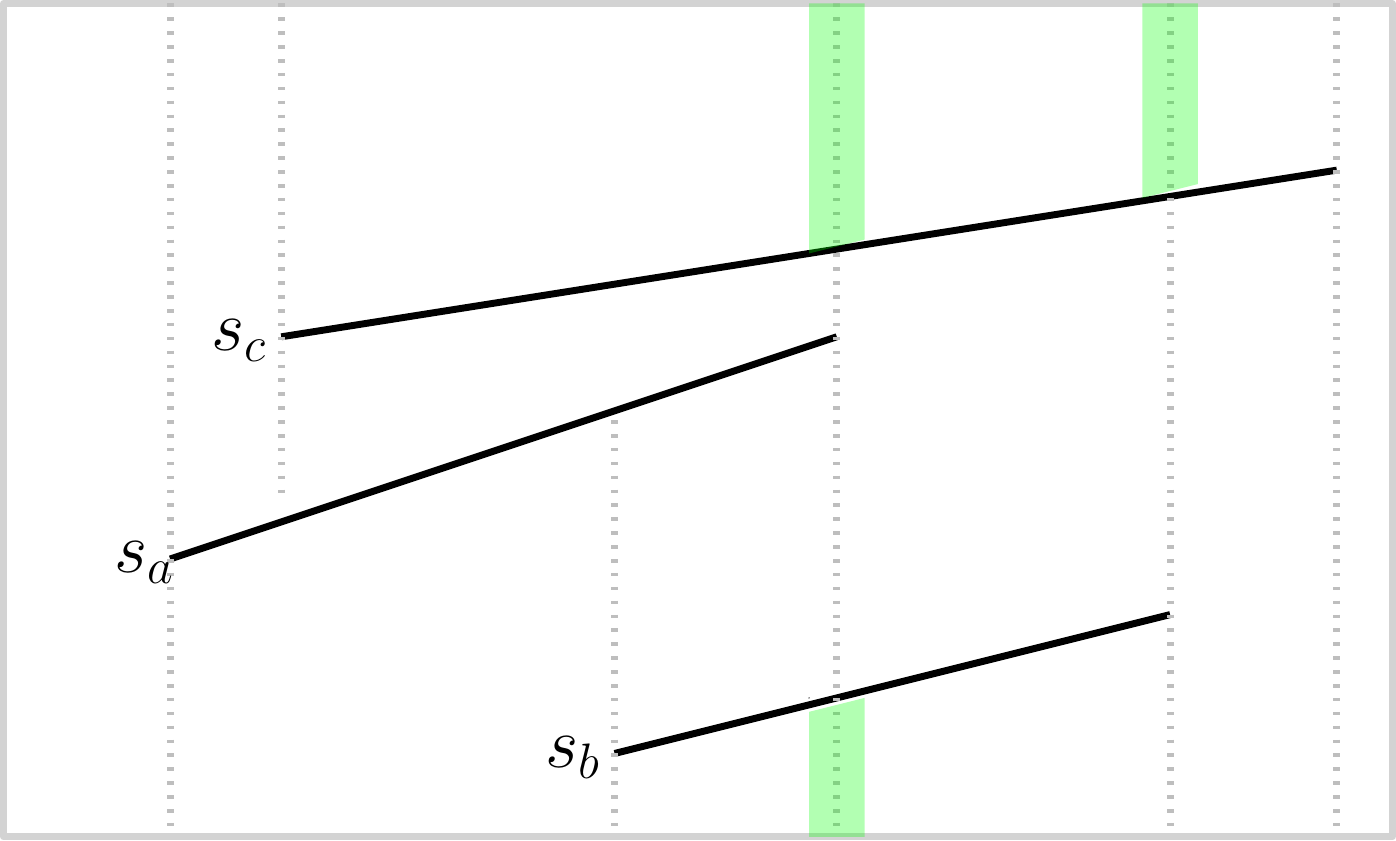}~
	\includegraphics[width=.3\linewidth]{./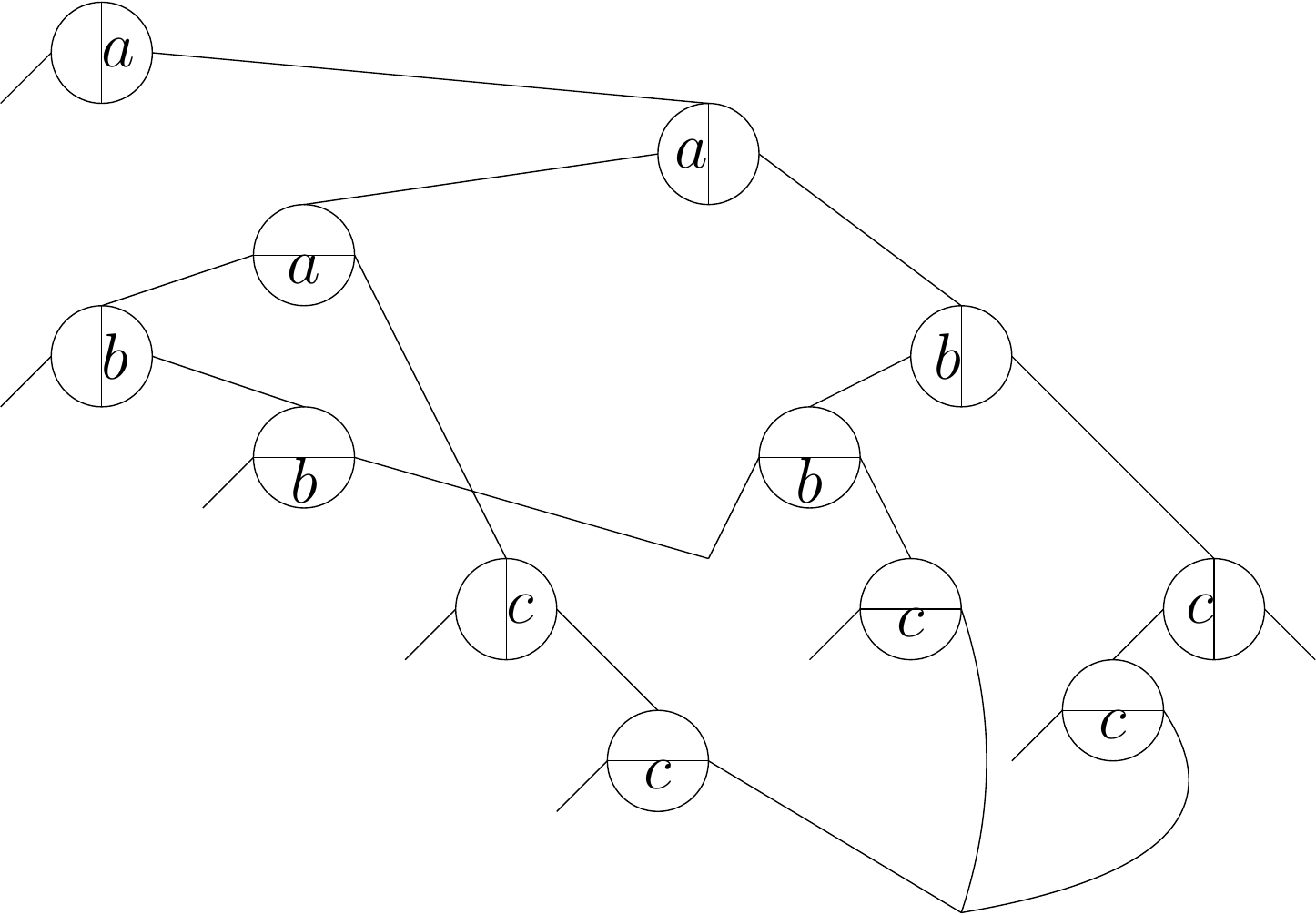}
	\caption{Example of TST $\leafIns$ (left) and TSD $\leafIns$ (right) for segments with priority order $s_a < s_b < s_c$ (middle). The vertical merges of Algorithm~\ref{alg:l-insert-DAG} are indicated in green.}\label{fig:example-leaf-insert-tst}
\end{figure}

\begin{theorem}[TST size and depth~\cite{AgarwalEG98}]\label{thm:exp-tree-size}
Let $S$ be a set of segments in $\R^2$ and $k_S$ the number of intersecting pairs among them.
The expected size of a TST over $S$ is bounded by
$$\avg_{\pi \in \Perm(S)}|\T(S,\pi)|  = \O(|S|\log |S| + k_S) ~.$$
The expected depth of leafs is $\O(\log |S|)$ and the maximal leaf depth of $\T$ is w.h.p. $\O(\log |S|)$.
\end{theorem}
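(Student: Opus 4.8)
The plan is to establish both claims by Seidel's backward analysis over a uniformly random priority order $\pi \in \Perm(S)$, exploiting that $\leafIns$ only refines leaf regions and never restructures the existing internal nodes. Consequently the final tree's node count equals $1$ plus the total number of nodes ever created, and since each $\leafIns$ replaces every intersected leaf by a subtree of at most three cuts (a VVE/VE/E pattern), we have $|\T(S,\pi)| = \O\bigl(1 + \sum_{i=1}^{|S|} N_i\bigr)$, where $N_i$ denotes the number of leaves of the intermediate tree whose region is met by the $i$-th inserted segment. The two statements then reduce to bounding (i) $\avg N_i$ for the size and (ii) the length of a single root-to-leaf search path for the depth.

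For the expected depth I would fix a query point $q$ and track the trapezoid $\Delta_i \ni q$ of the decomposition induced by a random $i$-element prefix of $\pi$. A new cut is appended to $q$'s search path at step $i$ exactly when the segment inserted last is one of the at most four segments whose cuts bound $\Delta_i$; by symmetry this has probability at most $4/i$, and each such event contributes $\O(1)$ cuts. Summing the harmonic series yields an expected depth of $\O(\log|S|)$, which is the clean core of the argument and needs no reference to crossings.

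For the size I would bound $\avg N_i$ by the same backward step applied to segments rather than points: $N_i$ is the cost of locating $s_i$, i.e. the search path of one endpoint (expected $\O(\log i)$, as above) together with the tree-trapezoids traversed along $s_i$. A backward analysis that charges these trapezoids to the defining segments of the incident leaves bounds the total in expectation by $\O(\log i)$ for non-crossing input -- this is precisely the over-refinement of $\A(S_i)$ that makes the tree larger than the DAG -- whence $\sum_i \O(\log i) = \O(|S|\log|S|)$. The intersections are handled separately by observing that each newly inserted segment creates $\O(1 + (\text{new crossings}))$ extra trapezoids, and these telescope to the additive $\O(k_S)$ term rather than $\O(k_S\log|S|)$; combining the two gives the stated expected size.

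The main obstacle is the with-high-probability bound on the maximal leaf depth, since the step indicators driving a single path are dependent, so a plain Chernoff bound does not apply directly. The plan is to invoke the exponential tail that backward-analysis sums of this form enjoy (each further defining event being increasingly unlikely), giving $\Pr[\,\text{depth}(q) > c\log|S|\,] \le |S|^{-\Omega(c)}$ for a fixed $q$, and then to take a union bound over a polynomially large set of canonical query points -- one per combinatorially distinct trapezoid arising over the relevant prefixes -- that dominates every root-to-leaf path. This is exactly the delicate part established in \cite{AgarwalEG98}; a secondary subtlety that the size argument must respect is that the tree over-refines $\A(S)$, producing the extra logarithmic factor relative to the DAG, while crossings must still be charged only $\O(k_S)$ in total.
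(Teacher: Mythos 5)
A preliminary remark: the paper itself contains no proof of Theorem~\ref{thm:exp-tree-size}; it imports the result from Agarwal et al.~\cite{AgarwalEG98} and only uses it (e.g.\ in Lemmas~\ref{lem:bucketing}--\ref{lem:segment-search}). So your attempt has to be judged against the classical argument, not against a proof in this paper.

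Your two \emph{depth} claims are essentially sound. For a fixed query point $q$, the TST leaf containing $q$ is always contained in the trapezoid of $\A(S_i)$ containing $q$ (the leaf partition refines $\A$), so $q$'s path can grow at step $i$ only if $s_i$ is one of the at most four segments defining that order-independent trapezoid; backward analysis gives probability at most $4/i$, each such event adds $\O(1)$ cuts, and the harmonic sum plus the standard tail-plus-union-bound machinery yields the expected and w.h.p.\ bounds. One imprecision: ``exactly when'' should be ``only if'', since the TST leaf is in general a strict subset of the $\A$-trapezoid; the inequality goes the right way for an upper bound, so this is harmless.

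The \emph{size} bound, however, has a genuine gap, and it sits exactly at the one place where the TST differs from the TSD. You propose to bound the expected number $N_i$ of TST leaves crossed by $s_i$ by ``a backward analysis that charges these trapezoids to the defining segments of the incident leaves.'' That charging scheme is the TSD argument, and it cannot give $\O(\log i)$: backward analysis requires the configuration reached after $i$ insertions to be independent of the insertion order, which holds for $\A(S_i)$ but fails for the TST leaves -- the leaves \emph{are} the order-dependent over-refinement you are trying to measure. Indeed, if a defining-segments charge did apply, it would give $\O(1)$ expected new leaves per step, i.e.\ an $\O(|S|+k_S)$ tree, contradicting the extra logarithmic factor you yourself acknowledge. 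The actual mechanism behind $\O(|S|\log|S| + k_S)$ is an analysis of how far the \emph{unmerged vertical cuts} extend: the vertical cut emitted from an endpoint $p$ of a segment $t$ reaches the $j$-th segment lying vertically beyond $p$ if and only if $t$ has smaller priority than the $j-1$ segments in between, an event of probability $1/j$; summing the harmonic series shows each endpoint cut crosses expected $\O(\log |S|)$ segments, for $\O(|S|\log|S|)$ in total. A vertical cut emitted from a \emph{crossing} point needs \emph{both} of its crossing segments to beat the $j-1$ intervening priorities, so it reaches the $j$-th segment only with probability $\O(1/j^2)$, and $\sum_j 1/j^2 = \O(1)$ is precisely why crossings contribute $\O(k_S)$ rather than $\O(k_S\log|S|)$ -- a fact you assert (``telescope'') but never derive. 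Likewise, your statement that each insertion creates only $\O(1+\text{new crossings})$ extra trapezoids is false for the TST; only the priority-based accounting above makes the per-step cost come out right. Without these two harmonic/priority arguments the central claim of the theorem, the $\O(|S|\log |S| + k_S)$ size, is not established.
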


Clearly, any TST over a set of segments $S$ contains at least $\Omega(|S| + k_S)$ nodes and there are certain instances that have a worst-case size of $\Omega(|S|^3)$ -- E.g. unfortunate insertion orders of segments with $k_S = \Omega(|S|^2)$.

Agarwal et al. \cite{AgarwalEG98} also mention an upper bound on the expected, purely incremental construction time of $\O(|S|\log^2 |S| + k_S \log|S|)$, which we match with the analysis of our dynamic updates (c.f. Section~\ref{sec:runtime}).

\subsection{Trapezoidal Decomposition $\A(S)$ and Search DAG $\D(S,\pi)$}\label{sec:def-tsd}
One may save space in such a search structure by considering a certain planar subdivision $\A(S)$ that is induced from a set of segments $S$.
Given aforementioned discussion of degeneracies, this subdivision is defined by emitting two vertical rays (in negative and positive $y$-direction) from each end and intersection point until the ray meets the first segment or the bounding rectangle.
Each face of $\A(S)$ is a trapezoid with $3$ or $4$ boundary edges and standard double counting establishes a size of $\O(|S|+k_S)$ for these planar decompositions.
Note that $\A(S)$ is independent of segment priorities.

Mulmuley~\cite{Mulmuley90} and Seidel~\cite{Seidel91} consider the incremental process of inserting an additional segment $s$ and obtaining $\A(S \cup \{s\})$ from $\A(S)$.
Algorithm~\ref{alg:l-insert-DAG} shows the additional merging phase, to `contract' pre-existing vertical-cuts (between the regions of the nodes in $L$) towards their emission point until they meet $s$, to maintain the property that the leaf regions coincide with $\A(S)$.
Hence, the planar subdivision of the leafs of a TST $\{\Delta(v) \subseteq \R^2:v \in \T(S,\pi), v \text{ is leaf}\}$ is a \emph{refinement} of $\A(S)$, regardless of $\pi$.

\begin{algorithm}[h!]
	\caption{$\leafIns(\D, s)$:} \label{alg:l-insert-DAG}
	\begin{enumerate}
		\item Search for the leaf nodes $L=\{u_1, \ldots, u_l\} \subseteq \D$ with $\Delta(u_i) \cap s \neq \emptyset$. \label{alg-line:l-insert-DAG-search}
		\item Create refined slabs of these regions by vertically partitioning each $\Delta(u_i)$ with cuts due to endpoints of $s$ or the intersection points of $s$ with edge-cuts bounding $\Delta(u_i)$.
		\item Partition the intersected slab regions further with the edge cut through $s$.
		\item Reconsider the vertical cuts between nodes in $L$ that cross $s$.\\
		Merge the vertical cuts of the refined regions whose emission point is now blocked by $s$.
	\end{enumerate}
\end{algorithm}

Though the resulting search graph is a DAG (also known as History DAG), the basic data stored with a node is still identical with the binary TST tree nodes (e.g. Figure~\ref{fig:example-leaf-insert-tst}).
One key ingredient for the well known, expected construction time of ${\O(|S|\log|S| + k_S)}$, is to perform the search for affected leafs (c.f. Step~\ref{alg-line:l-insert-DAG-search}) quickly.
A way to achieve this, without prior knowledge of all future segment insertions, is to also maintain an explicit graph representation of the planar subdivision $\A(S)$ of the segments $S$ that are currently contained in the structure.
This trick allows to perform only one point location query for, e.g., the left endpoint of $s$, followed by a walk along $s$ through $\A(S)$.
We summarize the well known aspects of this incremental algorithm in the following statement.

\begin{theorem}[TSD size and depth~\cite{BergCompGeo, Mulmuley90, Seidel91}]\label{thm:exp-dag-size}
	Let $S$ be a set of segments in $\R^2$ and $k_S$ the number of intersecting pairs among them.
	The expected size of a TSD over $S$ is bounded by
	$$\avg_{\pi \in \Perm(S)}|\D(S,\pi)|  = \O(|S| + k_S) ~.$$
The expected search path length is $\O(\log |S|)$ and the maximal search path length of $\D$ is w.h.p. in $\O(\log |S|)$.
\end{theorem}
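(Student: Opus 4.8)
The plan is to establish the three claims by a backward analysis over the random insertion order $\pi$, writing $S_i \subseteq S$ for the (random) set of the first $i$ inserted segments, and to obtain the high-probability part by appealing to the already-established TST guarantees of Theorem~\ref{thm:exp-tree-size}. The pivotal structural fact for both the size and the expected-depth bounds is that every trapezoid of any decomposition $\A(S_i)$ is determined by at most four of the segments in $S_i$: the (one or two) segments carrying its bounding edge-cuts and the segments whose end or intersection points emit its bounding vertical-cuts. Conditioned on the \emph{set} $S_i$, the last segment inserted is uniformly distributed over $S_i$, so the probability that a fixed trapezoid of $\A(S_i)$ is among the regions newly created at step $i$ is at most $4/i$.

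For the size bound I would write the total number of DAG nodes as a constant times $\sum_{i=1}^{|S|} N_i$, where $N_i$ counts the trapezoids that Algorithm~\ref{alg:l-insert-DAG} creates when inserting the $i$-th segment. A created trapezoid of $\A(S_i)$ must have the $i$-th segment among its at most four defining segments, so backward analysis yields $\avg[N_i \mid S_i] \le 4\,|\A(S_i)|/i$. Since $\A(S_i)$ has size $\O(i + k_{S_i})$ and the expected number of crossings in a uniformly random $i$-subset of $S$ is $k_S \binom{i}{2}/\binom{|S|}{2} = \O(k_S\, i^2/|S|^2)$, we get $\avg[N_i] = \O(1 + k_S\, i/|S|^2)$. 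Summing over $i$ gives $\O(|S|)$ from the first term, and the intersection term telescopes to $\O(k_S)$, which proves the claimed $\O(|S| + k_S)$.

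For the expected search path length I would fix a query point $q$ in general position and observe that the nodes visited while locating $q$ correspond, up to the constant local branching of a VVE/VE/E destruction, to the distinct trapezoids that contain $q$ across the sequence $\A(S_1), \dots, \A(S_{|S|})$. The trapezoid containing $q$ changes at step $i$ only when the $i$-th segment is one of its at most four definers, an event of probability at most $4/i$ by the same backward argument; hence the expected path length is at most a constant times $\sum_{i=1}^{|S|} 4/i = \O(\log|S|)$.

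The main obstacle is the high-probability bound on the \emph{longest} search path, since the single-point tail is too weak to survive a naive union bound over the $\Omega(|S|+k_S)$ query cells, and the per-step change-indicators for a fixed point are not independent. Rather than forcing a direct Chernoff-and-union-bound argument, I would invoke the search-path correspondence of Hemmer et al. between $\D(S,\pi)$ and $\T(S,\pi)$: each root-to-leaf search path in the DAG unfolds to a search path of the corresponding TST of no smaller length, so the w.h.p. $\O(\log|S|)$ bound on the maximal leaf depth of $\T(S,\pi)$ from Theorem~\ref{thm:exp-tree-size} transfers directly to the maximal search path length of $\D(S,\pi)$.
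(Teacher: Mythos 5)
The paper itself gives no proof of Theorem~\ref{thm:exp-dag-size}: it is quoted as a known result from \cite{BergCompGeo, Mulmuley90, Seidel91}, so your attempt has to be measured against the standard proofs in those references. Your first two parts are essentially those proofs and are sound: a trapezoid of $\A(S_i)$ has at most four defining segments, so conditioned on the \emph{set} $S_i$ it was created at step $i$ with probability at most $4/i$; together with $|\A(S_i)|=\O(i+k_{S_i})$ and $\avg[k_{S_i}]=k_S\, i(i-1)/(|S|(|S|-1))$ this yields the $\O(|S|+k_S)$ size bound, and the same backward step gives the expected query path bound $\O\bigl(\sum_i 1/i\bigr)=\O(\log|S|)$.

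The gap is in your third part: the transfer from Theorem~\ref{thm:exp-tree-size} runs in the \emph{wrong direction}. It is not true that a TSD search path ``unfolds to a search path of the corresponding TST of no smaller length.'' Because of the vertical merges, the region of a TSD node can stick out beyond the part of the plane that the search has already localized, so a TSD search traverses additional \emph{bouncing} nodes; the paper itself warns about these in Section~\ref{sec:algos}, citing Section~4 of \cite{Hemmer16}. Structurally, the TST leaf containing a query $q$ is always a \emph{subset} of the TSD leaf containing $q$ (TST leaves refine $\A$), so during the incremental construction the TSD leaf of $q$ is destroyed at least as often as its TST leaf, and every step that hits the TSD leaf but misses the TST leaf lengthens the TSD path while leaving the TST path unchanged. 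Three non-crossing segments already give a counterexample: take a horizontal $s_1$; a horizontal $s_2$ below it crossing the vertical wall emitted downward from $s_1$'s right endpoint, so that the two trapezoids below $s_2$ merge into one trapezoid $\Delta$; and a horizontal $s_3$ with both endpoints inside $\Delta$, straddling the removed wall. A query just above $s_3$ and left of the removed wall must, in the TSD, additionally test the vertical cut through $s_3$'s \emph{right} endpoint (a bouncing node), and its TSD search path is strictly longer than \emph{every} TST search path, i.e.\ $\max_q \mathrm{len}_{\D}(q) > \max_q \mathrm{len}_{\T}(q)$. So the w.h.p.\ bound for the TST cannot simply be cited for the TSD; what \cite{Hemmer16} actually provides is a bijection allowing one to \emph{compute} the longest TSD search path in time proportional to the TST \emph{size} (a Las Vegas verification tool), not a domination of TSD path lengths by TST path lengths. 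Moreover, your reason for rejecting the direct route is a misconception: in \cite{BergCompGeo} the dependence among the indicators $X_i$ is circumvented by backward analysis (conditioned on the sets $S_i,\dots,S_n$, the $i$-th indicator is $1$ with probability at most $4/i$, which suffices for a Chernoff-type bound on the moment generating function), giving for each fixed $q$ a tail of the form $(n+1)^{1-\lambda\ln 1.25}$ with a freely adjustable constant $\lambda$; a union bound over the polynomially many combinatorially distinct queries --- e.g.\ the $\O\bigl((|S|+k_S)^2\bigr)$ cells of the arrangement of all possible cuts --- then finishes the claim. Repairing your third part essentially means carrying out that argument, or else proving a quantitative bound on the number of bouncing nodes, which you have not done.
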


The TST and TSD structure also allow simple deletions of segments in a certain decremental setting, that deletes all segments in \emph{descending} priority order.
That is, the modifications that the structure undergoes in the steps of Algorithm~\ref{alg:l-insert-DAG} are simply undone in exactly the reverse order (of all $4$ steps), causing the same amount of work.

The insightful work of \cite{Hemmer16}, recently revealed a certain bijection between the search paths of $\T(S,\pi)$ and the (valid) search paths in $\D(S,\pi)$ by means of a structural induction argument (along fixed $\pi$) on the two incremental algorithms.
This allows them to bound the runtime of computing the length of a longest search path in a TSD $\D(S,\pi)$ by means of the size of the related TST $\T(S,\pi)$.
Based on this technique, our analysis of the update time in TSTs (c.f. Section~\ref{sec:runtime}) carries over to an upper bound in TSDs as well.

The main geometric difference between the structures nodes' regions is, that in TSTs the region of a node is always a subset of its parents' region, whereas in TSDs the region of a node may extend further to the left and right, due to vertical merges, but not across the top and bottom boundaries.
We now describe our recursive algorithms for vertical partitions and merges, and (non-vertical) partitions and merges that operate on intermediary priority levels (e.g. higher up in the search structures).

%
%
\section{Recursive Primitives for Dynamic Updates}\label{sec:algos}
Characteristic for our approach is that insertions and deletions are performed directly on higher levels of the search structures rather than solely on leafs.
We first describe the recursive primitives for inserting a new segment $s$ in the structure of $\T(S,\pi)$.

We choose a random position $p \in \{1,\ldots, |S|+1\}$ uniformly in which we emplace the element $s$, between $(p-1)$ and $p$, in the sequence $\pi$, calling the resulting priority order $\pi'$.
Our recursive algorithms then update the structure $\T(S,\pi)$ exactly to $\T(S \cup \{s\},\pi')$.
More precisely, we restrict the search (c.f. Step~\ref{alg-line:l-insert-DAG-search} of Algorithms~\ref{alg:l-insert-TST} and \ref{alg:l-insert-DAG}) to those nodes with priority at most $p$, which leads to a set $L$ of affected subtree roots whose priorities are larger than $p$ in $\pi'$.
We conceptually `hang out' these nodes by creating a copy $L'$ of them and reverting those in $L$ to leafs, temporarily.
The insertion then proceeds in exactly the same sequence as on regular leafs, but every binary space partition on nodes of $L$ is accompanied by the matching recursive call on the subtree of the node's copy in $L'$.
After this process, we have a matching subtree root in $L'$ for each leaf that was created below $L$, which we then `hang in' instead of the simple leaf.
Given the close relation of the two structures, it turns out that our recursive primitives, with few changes, already provide the necessities for dynamic updates of TSDs.

To simplify presentation in this and the following section, we assume that the priority values, stored with the segments, are integers in $\{1, \ldots, |S|+1\}$ that were already updated to represent $\pi'$.
Therefore $\pi'(s') > \pi'(s)$ comparisons are decidable in constant time.
Our Section~\ref{sec:online-vs-offlie} shows a simple solution, based on Treaps, to resolve this strong assumption sufficiently fast for the online setting.

Given the discussion above on the possible E-,VE- and VVE-destruction patterns, we simplify notation in this section by denoting with $\subtrees(v)$ the set of $0, 2, 3$, or $4$ descendants with the next higher priority value than $p(v)$.
E.g. $\subtrees(v)=()$ if $v$ is a leaf, $\subtrees(v)=(v_a,v_b)$ if $v$ underwent an E-destruction, $\subtrees(v)=(v_l,v_a,v_b)$ or $\subtrees(v)=(v_a,v_b,v_r)$ if $v$ underwent a VE-destruction, and $\subtrees(v)=(v_l,v_a,v_b,v_r)$ if $v$ underwent a VVE-destruction (e.g. Figure~\ref{fig:vPart-vMerge-Tree}).

\subsection{Priority Restricted Searches} \label{sec:segment-search}
Locating the affected subtree roots for our approach seems challenging without an explicit representation of $\A$ for the segments with priority at most $p$.
Given the monotonically increasing priorities on search paths,
we use a simple iterated ray shooting walk to find all nodes $L=\{u_i\}$, with minimal $p(u_i)>p(s)$, whose regions $\Delta(u_i)$ cover $s$.

To show sufficiently low expected bounds in Lemma~\ref{lem:segment-search},
we find the following alternative description helpful.
This top-down refinement process also derives the node set $L$, such that its nodes $u_i$ are sorted by the sequence in which $s$ stabs $\Delta(u_i)$ from left-to-right:
First locate the search node $u$, with maximal $p(u) < p(s)$, that fully contains $s$ in $\Delta(u)$.
Place $u$ in an initially empty list.
Successively replace the leftmost node $u$ in the list with $p(u) < p(s)$ with the nodes of $\subtrees(u)$, whose region intersects $s$, until all node priorities exceed $p(s)$.

Note that nodes with priority larger than $p(s)$ are not refined and the spatial location of the regions of the nodes in $\subtrees(\cdot)$ allows us to easily keep the set $L$ sorted by left-to-right stabbing sequence of $s$.

\subsection{Recursive Vertical Partitions and Merges}
Our methods are inspired by sorting algorithms on lists of integers, though they move cuts and child relations among nodes based on the nodes' priorities.
We first introduce the recursive primitive $\vPart(u, q,v^-,v^+)$, where $q \in \R^2$ denotes the point that induces the vertical cut $c(q)$ and $u$ is an affected tree node.
E.g. $p(u)>p(s)$ but the boundary cuts of $\Delta(u)$ have priority lower than $p(s)$.
As outlined above, the primitives update the search structure to the state that regular leaf insertion (in ascending priority order) would have created under the presence of vertical cut $c(q)$ splitting $\Delta(u)$ in $\Delta^- \cup \Delta^+$.
(A lengthy, rigorous proof fixes $\Delta(u)$ and performs a structural induction argument over the sequence of successive cuts that destroy the region.)
See Figure~\ref{fig:vPart-vMerge-Tree} for the recursive cases by destruction patterns (c.f. Algorithms~\ref{alg:v-part} and \ref{alg:v-merge} in Appendix~\ref{sec:code}).

The recursive node visits are similar to a search for points on $c(q)$ and $\vPart$ performs at most two recursive calls per node in TSTs.
Reverting these steps in exactly the reverse order provides the inverse operation $\vMerge(u^-,u^+, q, v)$.
Note that the TSD primitives actually perform \emph{fewer} recursive calls, since vertical cuts are contracted in this structure.
However, a small number of additional `bouncing' nodes (c.f. Section $4$ in \cite{Hemmer16}) may be visited before the next strict refinement of the search region happens (c.f. Section~\ref{sec:runtime}).

\begin{figure}[]
	\centering
	\includegraphics[width=\linewidth]{./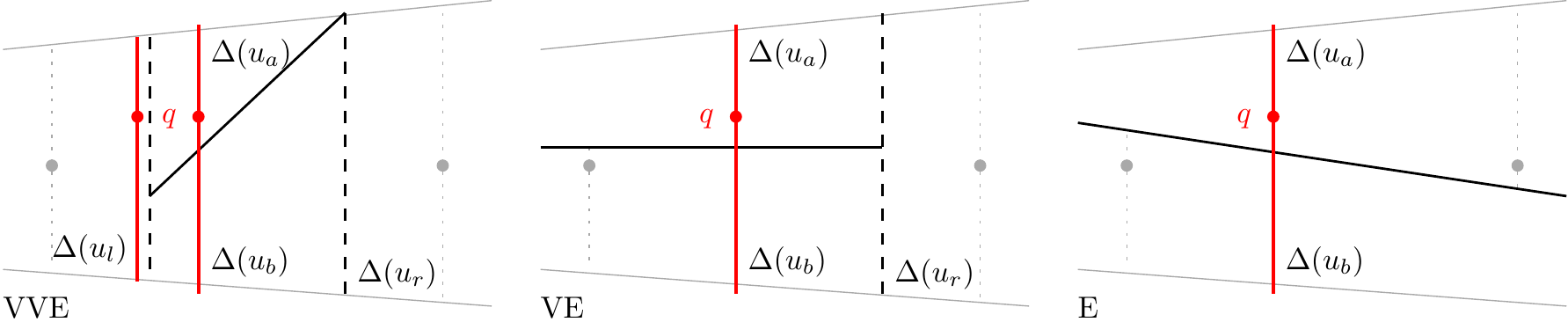}
	\caption{Cases during recursions of $\vPart(u, q,v^-,v^+)$ and $\vMerge(u^-,u^+, q, v)$.} \label{fig:vPart-vMerge-Tree}
\end{figure}

\subsection{Recursive Edge Partitions and Merges}
Based on the recursive primitives for vertical cuts, we introduce the recursive edge partition (c.f. Appendix Algorithm~\ref{alg:partition}) on slab regions that are fully crossed by the edge cut.
See Figure~\ref{fig:Partition-Merge} for an illustration of the (most technical) VVE pattern.

In case of no intersection with the edge cut that destroys $u$, the recursive results $\partition(u_i, c, v^-_i, v^+_i)$ on nodes $u_i$ in $\subtrees(u)$ only need $\vMerge$ calls on one side of $c$
(e.g. left in Figure~\ref{fig:Partition-Merge}: First on $v_a^+,v_r^+$ and then with $v_l^+$).
For new intersections with $c$, we first use $\vPart(u_a,i,v_{al},v_{ar})$ and $\vPart(u_b,i,u_{bl},u_{br})$ for the intersection point $i$, prior to the $\partition$ with $c$.
Finally we use $\vMerge$ to combine those results properly (e.g. right in Figure~\ref{fig:Partition-Merge}: Nodes $v_l^+$ with $v_{al}^+$ and $v_{br}^-$ with $v^-_r$).

Note that for segments from a planar subdivision, the treatment of intersections is not necessary (c.f. Second block of Algorithm~\ref{alg:partition} in Appendix~\ref{sec:code}).
The inverse primitive for edge cut merges (of two adjacent regions) can be derived analogously, by executing the inverse primitives in exactly the reverse order (c.f. Appendix Algorithm~\ref{alg:merge}).

\begin{figure}[] \centering
	\includegraphics[width=\linewidth]{./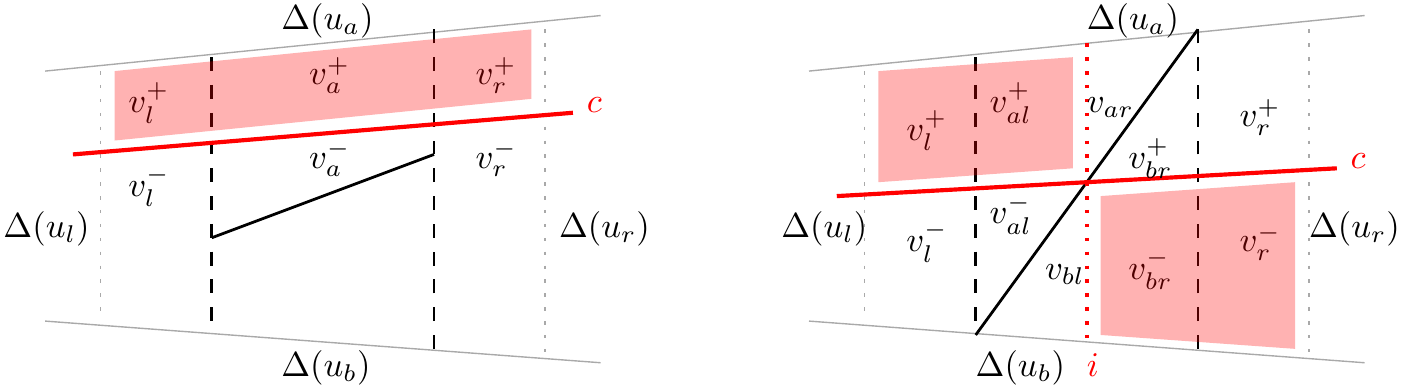}
	\caption{Cases during recursion of $\partition(u, c, v^-, v^+)$ and $\merge(u^-,u^+, c, v)$.} \label{fig:Partition-Merge}
\end{figure}

\section{Counting Search Nodes in Affected Regions} \label{sec:runtime}

To improve readability in this section, we denote for singleton elements the set union $S \cup \{s\}$ with $S + s$ and the set difference $S\setminus\{s\}$ with $S-s$.
With $\Perm(S)$ we denote the set of bijective mappings $\pi : S \to \{1,\ldots,|S|\}$.
For $S' \subseteq S$ we use the predicate ``$\pi(S') \leq |S'|$'' to abbreviate that $\pi(s) \leq |S'|$ holds for each $s \in S'$.

The following definitions concern sets of segments and their (deterministic) induced trapezoidal subdivision $\A$ of the plane.
For $s \in S\setminus B$, we define $F_B(s) \subseteq \R^2$ to denote the region of the faces in $\A(B)$ that are intersected by $s$, that is $F_B(s) = \bigcup_{\{f \in \A(B) : f\cap s \neq \emptyset\}} f$.
Moreover, for $s \in B$ we define the neighborhood region $N_B(s) \subseteq \R^2$ to be the union of all faces of $\A(B)$ that are adjacent to $s$ (e.g. $s$ contributes as edge, vertex or intersection cut).
Note that every point in $\R^2$ is in at most $4$ segment neighborhood regions and 
$N_{B + s}(s) = F_B(s)$.

Given a TST $\T(S,\pi)$ and some priority value $r \in\{1, \ldots, |S|\}$, we consider partitions of its nodes in classes $\T_{<r},\T_{=r}, $ and $\T_{>r}$, having priority less, equal or larger than $r$.
E.g. $\T_{<1}=\emptyset$ and $\T_{>r}$ always contains the leaf nodes.
As introduced in Section \ref{sec:def-tst}, we identify every node $v$ in $\T$ with its associated search region $\Delta(v) \subseteq \R^2$.

From the refinement property (c.f. Section~\ref{sec:def-tsd}), we have that the leafs' partition
$\{{\Delta(v) \subseteq \R^2} : v \in \T(S,\pi),~ p(v) = \infty\}$
of the domain is a (set theoretic) refinement of $\A(S)$, for any $S$ and $\pi \in \Perm(S)$.
This applies in particular for a set of segments $S_{\leq r} = \{s \in S:\pi(s)\leq r\}$.
Since additional leaf-insertions only refine further, we have that the region $\Delta(v)$ of a node $v \in \T_{> r}$ is either contained in or disjoint from neighborhood region $N_{S_{\leq r}}(s)$ for any $s \in S_{\leq r}$.

This provides the following $4$-ply covering bound.
For any $B \subseteq S$ with a $\pi \in \Perm(S)$ such that $\pi(B) \leq |B|$, we have that
\begin{align} \label{eq:refinement}
\sum_{s \in B} \Big|\{ v \in \T_{=l}(S,\pi) ~:~  \Delta(v) \subseteq N_B(s) \}\Big|
~\leq~ 4 \Big|T_{=l}(S,\pi)\Big|~,
\end{align}
for each $l\in\{|B|+1,\ldots,|S|\} + \infty$.

\begin{definition}[Affected Nodes]
Given a segment $s \in S$ of designated priority rank $1\leq r \leq |S|$, we call a node $v$ of $\T_{\geq r}$ over $S-s$ affected if and only if $\Delta(v) \cap s \neq \emptyset$ and it is topmost in $\T$.
That is, $v$ has no parent $u$ in $\T_{\geq r}$ with $p(u) \leq p(v)$.
\end{definition}
Clearly, for every $v \in \T_{\geq r}$ whose region $\Delta(v)$ intersects $s$, we have $\Delta(v) \subseteq F_{S_{<r}}(s)$ as well.
In other words, the affected nodes correspond precisely to the leafs of $\T( S_{<r},\pi)$, the tree over the first $r-1$ segments, that are intersected by $s$.
See Figure~\ref{fig:tree-restrict} for an illustration and Figure~\ref{fig:zone-refine} for an example.
In this example decomposition $\A(s_1,s_2)$ has $7$ faces of which $s_3$ intersects $3$.
The neighborhood region of $s_3$ is $F_{\{s_1,s_2\}}(s_3)=N_{\{s_1,s_2,s_3\}}(s_3)$ and shaded in red.
The TST for 
$\Big(\{s_1,s_2,s_4\},\bigl(\begin{smallmatrix}
s_1 & s_2 	& s_4 \\
1 	& 2 	& 3 
\end{smallmatrix}\bigr)\Big)$ 
has $13$ leaf regions and the TSD has $10$ leafs.
Note that for TST nodes $v$ with priority $p(v)\geq 3$, the region $\Delta(v)$ is either fully contained or outside the red zone. 

\begin{figure}[]
\centering
\begin{subfigure}[t]{0.425\textwidth}
	\centering
	\includegraphics[width=\columnwidth]{./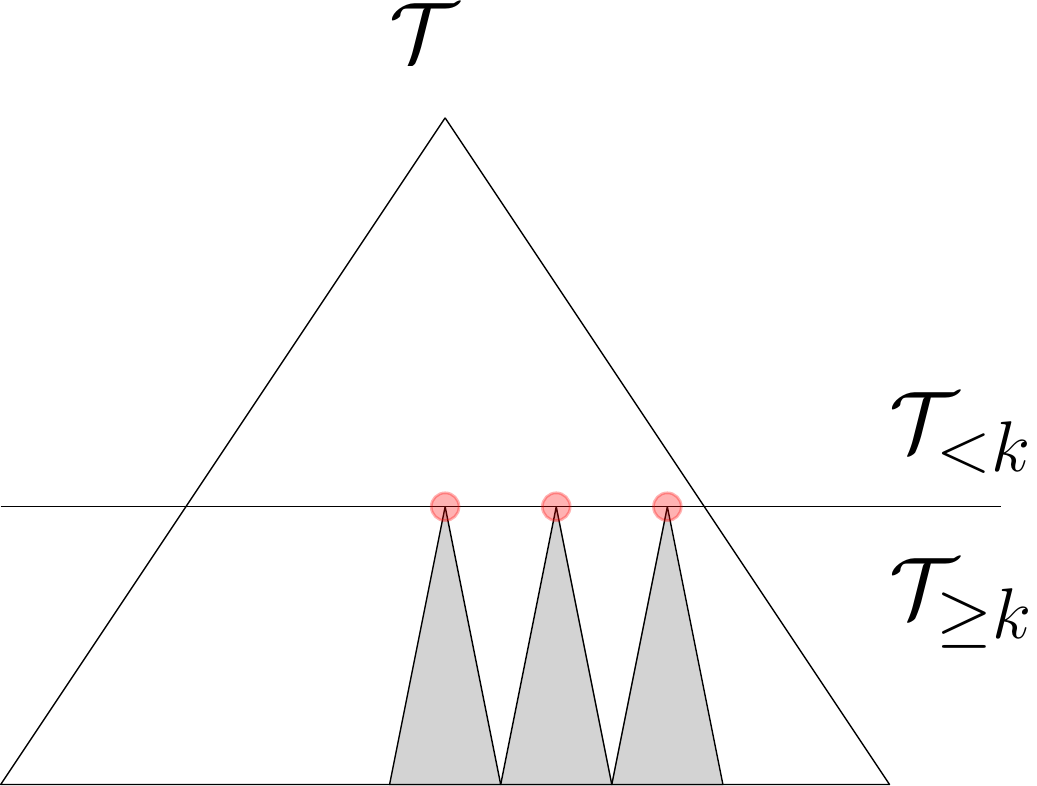}
	\caption{Partition of nodes in $\T$ in $\T_{<k}$ and \quad\\ $\T_{\geq k}$ due  to node priority. Affected \quad\\nodes are in red and their subtrees in gray.}\label{fig:tree-restrict}
\end{subfigure}%
~ 
\begin{subfigure}[t]{0.565\textwidth}
	\centering
	\includegraphics[width=\columnwidth]{./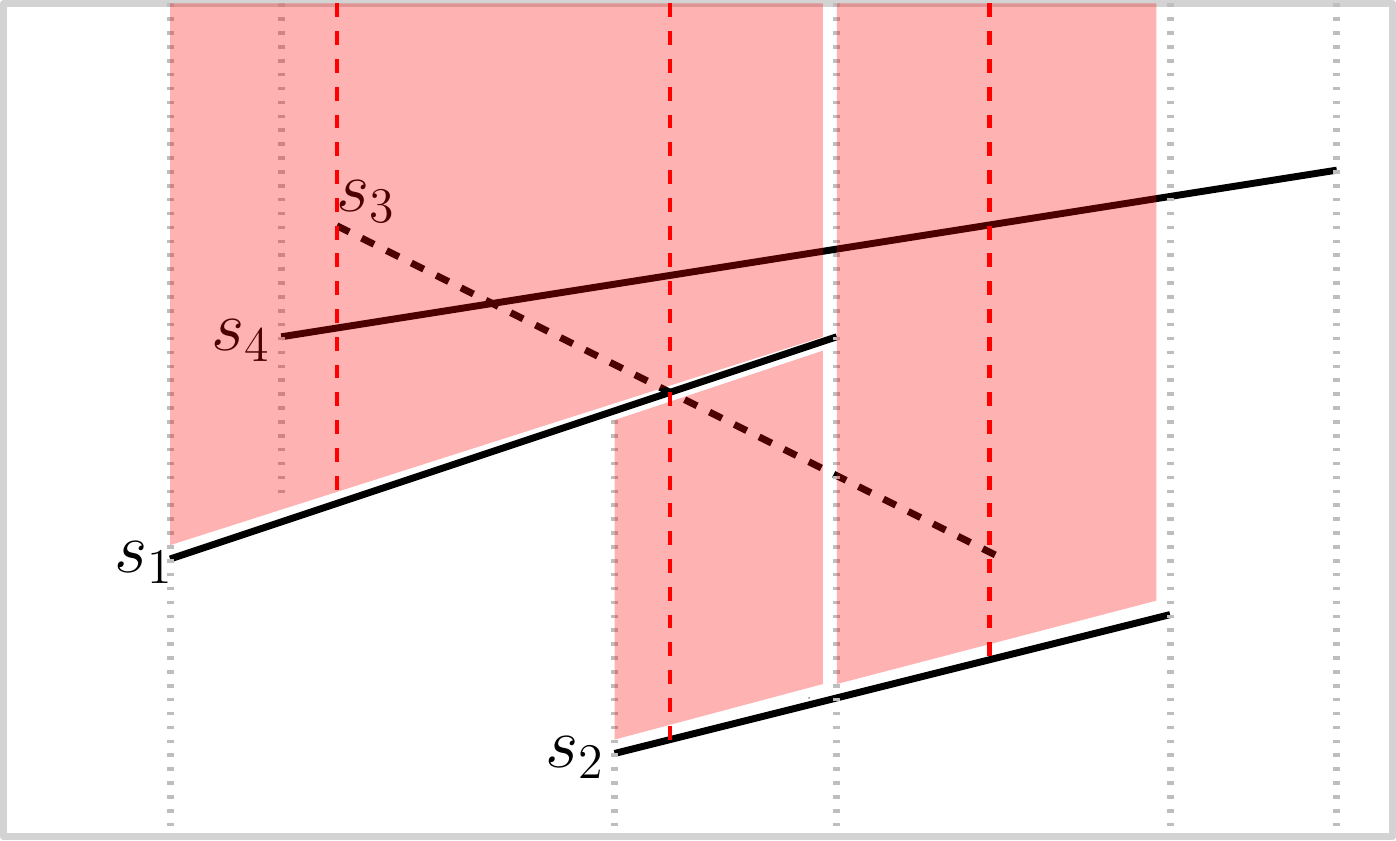}
	\caption{Neighborhood Zone Refinement. The neighborhood region of $s_3$ (shaded in red) is $F_{\{s_1,s_2\}}(s_3)=N_{\{s_1,s_2,s_3\}}(s_3)$.} \label{fig:zone-refine}
\end{subfigure}
\end{figure}

\begin{lemma}[Zone Covering] \label{lem:bucketing}
	Let $S$ be a set of segments,  $1 \leq p \leq |S|$ fixed, and $s \in S$.
	For any $l \geq p$, the expected number of priority $l$ nodes in a random TST over $S-s$, that fall into the neighborhood region of $s$ having priority rank $p$, is upper bounded by
	$$
	\avg_{\pi \in \Perm(S-s)} \Big| \Big\{ v \in \T_{=l}(S-s,\pi)~: \Delta(v) \subseteq F \Big\}\Big|
	~\leq ~
	\frac{4}{p} \avg_{\pi \in \Perm(S-s)} \Big|\T_{=l}(S-s, \pi)\Big| ~,$$
	where $F=F_{\{s' \in (S-s) ~:~ \pi(s')<p\}}(s)$ denotes the neighborhood region of $s$ among the segments of $S-s$ with lower priority than $p$.
\end{lemma}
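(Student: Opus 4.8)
The plan is to combine the $4$-ply covering bound of inequality~\eqref{eq:refinement} with a backward-analysis (random-permutation) argument that controls the probability that the neighborhood region $F$ is the ``lowest-priority'' zone among the candidate segments. The key observation is that the region $F$ in the statement is determined by which segments of $S-s$ receive priority smaller than $p$; summing over the possible identities of these low-priority segments will let me reuse~\eqref{eq:refinement} after dividing by a factor that captures the probability each segment plays the role of $s$.

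First I would fix the priority rank $p$ and condition on the (unordered) set $B \subseteq S-s$ of the $p-1$ segments that end up with priority below $p$, together with $s$ itself, so that I work with a set of size $p$. For a fixed such configuration, the region $F = F_B(s) = N_{B+s}(s)$ is exactly a segment neighborhood region in $\A(B+s)$, which is the setting inequality~\eqref{eq:refinement} addresses. The idea is then to average over which of the $p$ relevant segments takes the special role: by symmetry of the uniform permutation, each segment in $B+s$ is equally likely to be the one of designated rank, so the expected count for the single distinguished segment $s$ equals $\tfrac{1}{p}$ times the sum over all $p$ choices of the distinguished segment. Applying~\eqref{eq:refinement} to that full sum bounds it by $4\,|\T_{=l}(S,\pi)|$, which produces the claimed factor $\tfrac{4}{p}$ once I take expectations over $\pi \in \Perm(S-s)$.

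More concretely, I would write the left-hand expectation as an average over permutations, partition $\Perm(S-s)$ according to the low-priority set $B$ (those with $\pi(s') < p$), and within each class invoke the refinement property from Section~\ref{sec:def-tsd}: since the leaves of $\T_{>r}$ refine $\A(S_{\le r})$, every priority-$l$ node with $l \ge p$ has region either contained in or disjoint from $F$, so the count is well defined and matches the hypothesis ``$\pi(B) \le |B|$'' needed for~\eqref{eq:refinement}. The symmetrization step then reinterprets the single-segment expectation over $S-s$ as $\tfrac{1}{p}$ of the $B+s$-indexed sum over $S$, at which point~\eqref{eq:refinement} closes the bound.

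\textbf{Main obstacle.} The delicate part will be making the symmetrization rigorous: I must carefully align the two probability spaces, namely permutations of $S-s$ (where $s$ is already distinguished and fixed) against the summation in~\eqref{eq:refinement} over all $s \in B$ with a permutation of the full set. Concretely, I need to verify that conditioning on the unordered low-priority block and then letting each block-member play the distinguished role reproduces exactly the uniform measure on $\Perm(S-s)$, and that the event $\Delta(v) \subseteq F$ transforms consistently under this relabeling. Once the book-keeping of these two experiments is matched up, the $4$-ply bound supplies the numerator and the uniform choice of the distinguished segment supplies the $\tfrac{1}{p}$ denominator, giving the stated inequality.
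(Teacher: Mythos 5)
Your proposal follows essentially the same route as the paper's proof: a backward analysis that groups permutations by the low-priority block, a double-counting/symmetrization over which element of the size-$p$ block plays the distinguished role to extract the factor $\tfrac{1}{p}$, and the $4$-ply covering bound~(\ref{eq:refinement}) to extract the factor $4$. The probability-space alignment you flag as the main obstacle is precisely what the paper resolves by regrouping the sum over $\Perm(S)$ by the sets $B \in \binom{S}{p}$, applying the identity $\binom{n}{k}\binom{k}{1}=\binom{n}{1}\binom{n-1}{k-1}$ to pass back to expectations over $\Perm(S-s)$, and concluding for the fixed segment $s$ by observing that the resulting bound does not depend on the spatial location of $s$.
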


On a sequence of elements, we call the process to place a new element between the elements of positions $p-1$ and $p$ \emph{emplacing} at $p$ and \emph{dropping} is the reverse operation.
E.g. old elements of index less than $p$ remain and those of at least $p$ are moved one position to the right.
We perform a backward analysis and consider dropping the element at position $p$ in sequences of $\Perm(S)$ instead of emplacing a $s \in S$ at position $p$ in sequences of $\Perm(S-s)$.

\begin{proof}
	We group the sequences in $\Perm(S)$ by those having the same sets $B \in \binom{S}{p}$ in the first $p$ entries to use that $\bigcup_{s \in B} N_B(s) = \R^2$ with (\ref{eq:refinement}).
	
	For $\pi \in \Perm(S)$, we denote with $\pi^{-1}(p) \in S$ the element that is mapped to position $p$ in the sequence and with $\pi -s$ we denote the sequence of $\Perm(S-s)$ that results from dropping element $s$.
	To shorten notation, we define a random variable $\gamma$ that counts the relevant nodes of a tree within a certain region $N \subseteq \R^2$, that is $\gamma(S,\pi,N) = |\{ v \in T_{=l}(S,\pi)~:~\Delta(v) \subseteq N \}|$.
	
	We count the expected number of nodes within $\T(S-\pi^{-1}(p), \pi - \pi^{-1}(p))$, that fall into the region $N_{\{s \in S~:~\pi\left(s\right)\leq p\}} \left(\pi^{-1}\left(p\right) \right)$, which is the neighborhood of the element at position $p$ among those with priority at most $p$.
	\begin{align}
	&
	\avg_{\pi \in \Perm(S)} 
	\gamma\left(
	S-\pi^{-1}\left(p\right), 
	\pi - \pi^{-1}\left(p\right),
	N_{\{s \in S~:~\pi\left(s\right)\leq p\}} \left(\pi^{-1}\left(p\right) \right)
	\right)
	\\
	=&
	\frac{1}{|S|!}\sum_{B \in \binom{S}{p}} \sum_{s \in B}
	\sum_{\overset{\pi \in \Perm(S)~:}{\pi(B)\leq p,~\pi(s)=p}}
	\gamma(S-s,\pi-s, N_B(s) )
	\label{eq:BWD-zone-regroup}
	\\
	=&
	\frac{1}{|S|!}\sum_{B \in \binom{S}{p}} ~~ \frac{1}{p}\sum_{s \in B}~~ \sum_{s \in B} 
	\sum_{\overset{\pi \in \Perm(S)~:}{\pi(B)\leq p,~\pi(s)=p}}
	\gamma(S-s,\pi-s, N_B(s) )
	\label{eq:BWD-zone-doublecount}
	\\
	\leq&
	\frac{1}{|S|!}
	\sum_{B \in \binom{S}{p}}
	\sum_{s \in B} \frac{1}{p}
	\sum_{\overset{\pi \in \Perm(S)~:}{\pi(B)\leq p,~\pi(s)=p}}
	4\cdot\gamma(S-s,\pi-s,\R^2 )
	\label{eq:BWD-4-ply}
	\\
	=&
	\frac{4}{|S|!~p}
	\sum_{s \in S}
	\sum_{B \in \binom{S-s}{p-1}}
	\sum_{\overset{\pi \in \Perm(S-s)~:}{\pi(B) < p}}
	\gamma(S-s,\pi,\R^2 )
	\label{eq:BWD-comb-exchange}
	\\
	=&
	\frac{1}{|S|}
	\sum_{s \in S}
	\frac{4}{p}
	\frac{1}{|S-s|!}
	\sum_{\pi \in \Perm(S-s)}  \gamma(S-s,\pi,\R^2 )
	\\
	=&
	\frac{1}{|S|}
	\sum_{s \in S} ~~
	\frac{4}{p}
	\avg_{\pi \in \Perm(S-s)}  \gamma(S-s,\pi,\R^2 ) \label{eq:ZoneLastLine}
	\end{align}
Equation (\ref{eq:BWD-zone-regroup}) is due to regrouping the summation terms,  (\ref{eq:BWD-zone-doublecount}) due to double counting, and (\ref{eq:BWD-4-ply}) due to the $4$-ply covering.
Equation (\ref{eq:BWD-comb-exchange}) uses the combinatorial identity $\binom{n}{k}\binom{k}{1}=\binom{n}{1}\binom{n-1}{k-1}$ to first choose $s\in S$ for position $p$.
Since the terms in (\ref{eq:ZoneLastLine}) do not depend on the spatial location of $s$, the bound holds for any $s\in S$.
\end{proof}

We now bound the expected total size of subtrees below affected nodes (c.f. Figure~\ref{fig:tree-restrict}).
The proof of the following statement hinges on a combinatorial identity that counts the different sequences of a process that puts $n$ elements in an array with $n$ position slots.
$$
\sum_{p=1}^n \binom{n}{1} \binom{n-1}{n-p}(p-1)!(n-p)!  \quad = \quad n \cdot n!
$$
Though there are $n \cdot n!$ sequences to the processes, each of the resulting $n!$ assignments is created exactly $n$ times. Hence also the cost associated with one such assignment.

\begin{lemma}[Subtree Sizes] \label{lem:affected-subtrees}
	Let $S$ be a set of segments and $s \in S$ be a fixed element with a designated, uniformly random priority in $\{1, \ldots, |S|\}$.
	The expected total size of affected subtrees in a random TST over $(S-s)$ is $\O \left(\log^2 |S| + \frac{k_S}{|S|} \log|S| \right)$, where $k_S$ denotes the number of intersecting pairs in $S$.
\end{lemma}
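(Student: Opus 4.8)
The plan is to bound the expected total subtree size by a sum of zone-covering counts, one per priority level, invoke Lemma~\ref{lem:bucketing} termwise, and then average over the random priority of $s$, where a harmonic factor will appear. First I would condition on the priority rank $p$ of $s$ and argue that every node lying in an affected subtree has priority at least $p$ and region contained in the neighborhood zone $F = F_{\{s' \in S-s\,:\,\pi(s') < p\}}(s)$. Indeed, an affected node $v$ is a topmost node of $\T_{\geq p}(S-s)$ that meets $s$, so $\Delta(v) \subseteq F$ by the remark following the Affected Nodes definition, and every descendant $w$ of $v$ satisfies $p(w) > p(v) \geq p$ together with $\Delta(w) \subseteq \Delta(v) \subseteq F$. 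Since the affected nodes are exactly the leafs of $\T(S_{<p},\pi)$ that meet $s$, they have interior-disjoint regions, so summing subtree sizes counts each such node at most once. This yields the key inequality
\[
\text{(affected subtree size)} \;\leq\; \bigl|\{ w \in \T_{\geq p}(S-s,\pi) : \Delta(w) \subseteq F\}\bigr|
\;=\; \sum_{l \geq p}\bigl|\{ w \in \T_{=l}(S-s,\pi): \Delta(w)\subseteq F\}\bigr|,
\]
where the sum ranges over $l \in \{p,\ldots,|S|\}+\infty$ and includes the leaf level.

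Next I would take the expectation over $\pi \in \Perm(S-s)$ and apply Lemma~\ref{lem:bucketing} to each term. By linearity the left side becomes the expected affected subtree size conditioned on rank $p$, while each summand on the right is bounded by $\tfrac{4}{p}\avg_\pi|\T_{=l}(S-s,\pi)|$. Summing over $l \geq p$ collapses the size classes, giving
\[
\avg_{\pi}\bigl[\text{affected subtree size}\mid p\bigr]
\;\leq\; \frac{4}{p}\sum_{l\geq p}\avg_\pi|\T_{=l}(S-s,\pi)|
\;\leq\; \frac{4}{p}\,\avg_\pi|\T(S-s,\pi)|.
\]

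Finally I would average over the uniformly random priority $p \in \{1,\ldots,|S|\}$, which produces the harmonic factor
\[
\avg\bigl[\text{affected subtree size}\bigr]
= \frac{1}{|S|}\sum_{p=1}^{|S|}\avg_\pi\bigl[\text{affected subtree size}\mid p\bigr]
\leq \frac{4\,\avg_\pi|\T(S-s,\pi)|}{|S|}\sum_{p=1}^{|S|}\frac{1}{p}
= \frac{4 H_{|S|}}{|S|}\,\avg_\pi|\T(S-s,\pi)|.
\]
Substituting the TST size bound $\avg_\pi|\T(S-s,\pi)| = \O(|S|\log|S| + k_S)$ from Theorem~\ref{thm:exp-tree-size} (using $|S-s|\leq|S|$ and $k_{S-s}\leq k_S$) together with $H_{|S|} = \O(\log|S|)$ yields exactly $\O\!\left(\log^2|S| + \frac{k_S}{|S|}\log|S|\right)$. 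The harmonic factor is precisely the footprint of the combinatorial accounting recorded before the statement: averaging the emplaced element over its random position reproduces each ordering the requisite number of times.

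The hard part will be the reduction step rather than the arithmetic. One must check that the zone-covering count is a genuine \emph{over}-estimate of the affected subtree size, so the inequality points the right way: a priority-$\geq p$ node whose region lies in $F$ need not sit below an affected node, since it may refine a face of $\A(S_{<p})$ that $s$ meets only in a part disjoint from that node. Establishing that such nodes only inflate the count, and verifying that Lemma~\ref{lem:bucketing} indeed covers the leaf level $l=\infty$ needed to account for the leafs of the affected subtrees, are the points that require genuine care.
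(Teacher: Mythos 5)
Your proof is correct and follows essentially the same route as the paper: bound the total affected subtree size by zone-covering counts per priority level, apply Lemma~\ref{lem:bucketing} termwise, collapse the levels into the whole-tree size, and average over the uniformly random rank $p$ to pick up the harmonic factor before invoking Theorem~\ref{thm:exp-tree-size}. The only cosmetic differences are that the paper additionally averages over the choice of $s$ (using $\sum_{s \in S} k_{S-s} = (|S|-1)k_S$) whereas you keep $s$ fixed and use $k_{S-s} \leq k_S$, and that you spell out the over-counting direction of the reduction, which the paper leaves implicit.
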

\begin{proof}
We express the expected cost in terms of the left hand side of the above combinatorial identity, which allows us to use Lemma~\ref{lem:bucketing} for all nodes that have at least the designated priority of $s$.
\begin{align}
&~
\frac{1}{|S|^2}\sum_{1 \leq p \leq |S|} 
\sum_{s \in S } 
\frac{4}{p} \avg_{\pi \in \Perm(S-s)} \Big|\T_{\geq p}(S-s, \pi)\Big|
\\
\leq&~
\frac{1}{|S|^2}\sum_{1 \leq p \leq |S|} 
\sum_{s \in S } 
\frac{4}{p} \avg_{\pi \in \Perm(S-s)} \Big|\T(S-s, \pi)\Big|
\\
\leq&~
\frac{1}{|S|^2}\sum_{1 \leq p \leq |S|} 
\frac{4}{p}
\sum_{s \in S } 
\O \left( |S-s|\log |S-s| + k_{S-s} \right) \label{eq:subTreeSizeTotalSize}
\\
\leq&~
\O \left(
\frac{1}{|S|^2}\sum_{1 \leq p \leq |S|} 
\frac{|S||S-s|\log |S-s| + (|S|-1)k_S}{p}
\right)
\leq \O \left(\log^2|S|+\frac{k_S}{|S|}\log |S| \right)
\label{eq:subTreeSizeIntersect}
\end{align}
Bound (\ref{eq:subTreeSizeTotalSize}) is due to the expected size of the whole tree (c.f. Theorem~\ref{thm:exp-tree-size}) and (\ref{eq:subTreeSizeIntersect}) is due to $\sum_{s\in S} k_{S-s}=\sum_{s\in S} (k_S-k_{\{s\}})=(|S|-1)k_S$.
\end{proof}

We now bound the cost for the iterated ray-shooting search to find the affected subtree roots for a query segment that has a designated random priority rank (c.f. Section~\ref{sec:segment-search}).

\begin{lemma}[Segment Search] \label{lem:segment-search}
	Let $S$ be a set of non-crossing segments and $s \in S$.
The expected time to find the affected nodes within a random TST over $(S-s)$	
 is $\O \left(\log^2 |S|\right)$.
\end{lemma}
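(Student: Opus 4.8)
The plan is to charge the entire cost of the iterated ray-shooting walk (Section~\ref{sec:segment-search}) to the set of \emph{refined} search nodes, namely those nodes $v$ in the random TST over $S-s$ with priority below the designated rank $r$ of $s$ whose region intersects $s$. For a fixed rank $r$ and $\pi\in\Perm(S-s)$, write
$$
N_l \;=\; \Big|\{\, v\in\T_{=l}(S-s,\pi) : \Delta(v)\cap s\neq\emptyset \,\}\Big| .
$$
Each refinement step replaces one node $u$ with $p(u)<r$ by the at most four regions of $\subtrees(u)$, and the initial root-to-start descent only passes through ancestors that contain $s$ (hence intersect it) and have priority below $r$. Since the affected nodes are exactly the leaves of $\T(S_{<r},\pi)$ stabbed by $s$, each of them is examined as a terminal child of some refined node, so it need not be counted separately. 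Consequently the whole search costs $\O\big(\sum_{l<r} N_l\big)$, and it suffices to bound $\avg_{r,\pi}\sum_{l<r}N_l$.

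The key observation is that $N_l$ is already governed by Lemma~\ref{lem:bucketing}. A node $v\in\T_{=l}(S-s,\pi)$ is a leaf of $\T(S_{<l},\pi)$ with $S_{<l}=\{s'\in S-s:\pi(s')<l\}$, so by the refinement property (Section~\ref{sec:def-tsd}) its region lies inside a single face $f$ of $\A(S_{<l})$. If $\Delta(v)\cap s\neq\emptyset$ then $f\cap s\neq\emptyset$, whence $\Delta(v)\subseteq f\subseteq F_{S_{<l}}(s)$. Thus every node counted by $N_l$ falls into the neighborhood region $F=F_{\{s'\in S-s:\pi(s')<l\}}(s)$, and Lemma~\ref{lem:bucketing} with its parameter set to $l$ gives
$$
\avg_{\pi\in\Perm(S-s)} N_l \;\le\; \frac{4}{l}\,\avg_{\pi\in\Perm(S-s)}\Big|\T_{=l}(S-s,\pi)\Big| .
$$

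To assemble the averages I would take the rank $r$ uniform in $\{1,\dots,|S|\}$ (as in Lemma~\ref{lem:affected-subtrees}) and exchange the order of summation over $r$ and $l$:
$$
\avg_{r,\pi}\sum_{l<r}N_l \;\le\; \frac{1}{|S|}\sum_{l=1}^{|S|-1}(|S|-l)\,\frac{4}{l}\,\avg_{\pi}\Big|\T_{=l}(S-s,\pi)\Big| \;\le\; 4\sum_{l=1}^{|S|-1}\frac{1}{l}\,\avg_{\pi}\Big|\T_{=l}(S-s,\pi)\Big| .
$$
The nodes of priority at most $l$ are the internal nodes of the TST over the first $l$ segments, so the partial sums $A_l=\sum_{j\le l}\avg_\pi|\T_{=j}(S-s,\pi)|$ satisfy $A_l=\O(l\log l)$ by Theorem~\ref{thm:exp-tree-size} --- here the non-crossing hypothesis $k_S=0$ is exactly what removes the intersection term. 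An Abel summation then rewrites $\sum_l \tfrac1l(A_l-A_{l-1})$ as $\tfrac{A_{|S|-1}}{|S|-1}+\sum_l \tfrac{A_l}{l(l+1)}=\O(\log^2|S|)$, yielding the claimed bound.

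I expect the main obstacle to be the first step: arguing cleanly that the \emph{entire} running time --- including locating the start node and examining the discarded children --- is dominated, up to the constant from $|\subtrees(\cdot)|\le 4$, by $\sum_{l<r}N_l$, rather than by some larger quantity. The geometric containment and the final summation are comparatively routine once this charging is fixed; the one genuinely delicate point is that $N_l$ counts nodes \emph{intersecting} $s$ whereas Lemma~\ref{lem:bucketing} counts nodes \emph{contained} in the neighborhood region, a gap that is bridged precisely by the single-face refinement argument of the second paragraph.
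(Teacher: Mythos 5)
Your proposal is correct and takes essentially the same approach as the paper: both charge the search cost to the nodes of priority below the designated rank whose regions intersect $s$, bound their expected number per priority level by combining Lemma~\ref{lem:bucketing} (bridged to ``intersecting'' nodes via the refinement property) with the expected size of the TST over a prefix of the priority order (Theorem~\ref{thm:exp-tree-size}, with $k=0$ for non-crossing segments), and finally average over the uniformly random rank. The only difference is bookkeeping in the final summation --- the paper groups priorities into dyadic batches $(2^{j-1},2^j]$ and invokes Lemma~\ref{lem:bucketing} once per batch, whereas you invoke it per level $l$ and close with an Abel summation --- which yields the same $\O(\log^2|S|)$ bound.
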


\begin{proof}
	The search visits only nodes $v$ with regions $\Delta(v)\cap s \neq \emptyset$.
	Since Lemma~\ref{lem:affected-subtrees} bounds the size of the result set, we only need to bound the number of nodes with priority smaller $p(s)$.
	We do so by invoking Lemma~\ref{lem:bucketing} $\lceil \log_ 2(p(s))\rceil$ times.
	\begin{align}
&\frac{1}{|S|}\sum_{p=1}^{|S|}
\avg_{\pi \in \Perm(S-s)}
\Big| \Big\{ v \in \T(S-s,\pi)~:~ p(v)\leq p ,~ \Delta(v) \cap s \neq \emptyset\Big\} \Big|
\\
\leq &
\frac{1}{|S|}\sum_{p=1}^{|S|} \sum_{j=1}^{\lceil \log_2 p\rceil} 
\avg_{\pi \in \Perm(S-s)}
\Big| \Big\{ v \in \T(S-s,\pi)~:~ 2^{j-1}<p(v)\leq 2^j ,~ \Delta(v) \cap s \neq \emptyset\Big\} \Big|
\label{eq:segSearch-batches}
\\
\leq &
\frac{1}{|S|}\sum_{p=1}^{|S|} \sum_{j=1}^{\lceil \log_2 p \rceil} 
\frac{4}{2^{j-1}}
\avg_{\pi \in \Perm(S-s)}
\Big| \Big\{ v \in \T(S-s,\pi)~:~ 2^{j-1}<p(v)\leq 2^j \Big\} \Big|
\label{eq:segSearch-lemma}
\\
\leq &
\frac{1}{|S|}\sum_{p=1}^{|S|} \sum_{j=1}^{\lceil \log_2 p \rceil} 
\frac{4}{2^{j-1}}
2^j \log (2^j)
\leq
\frac{1}{|S|}\sum_{p=1}^{|S|} \sum_{j=1}^{\lceil \log_2 p \rceil} 
8 j \log(2)
\\
\leq &
\frac{1}{|S|}\sum_{p=1}^{|S|} \O(\log^2 p)
\leq 
\frac{ \O(|S|\log^2 |S|)}{|S|}
\leq \O(\log^2 |S|)
  \end{align}
	
With (\ref{eq:segSearch-batches}), we count search nodes in batches of priority intervals of the form $(2^{j-1}, 2^j]$.
Bound (\ref{eq:segSearch-lemma}) is due to invoking Lemma~\ref{lem:bucketing} for $s$ according to the priority interval.
	For such a batch $j$, we bound $\avg_{\pi \in \Perm(S-s)}|\T_{\leq 2^j}(S-s, \pi)|$ with $\O(2^j \log (2^j))$, that is the expected size of a TST over a set of $2^j$ segments.
\end{proof}

Now we have all necessary arguments for our bound on dynamic update costs in TSTs.
\begin{theorem} \label{thm:TSTinsert}
	Let $S$ be a set of non-crossing segments and $s \in S$.
	The expected cost of inserting $s$ in a random TST over $(S-s)$ 
	is $\O(\log^2 |S|)$.
\end{theorem}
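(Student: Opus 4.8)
The plan is to decompose the total insertion cost into three additive contributions and bound each of them by $\O(\log^2|S|)$ using the lemmas already established. Inserting $s$ into the random TST over $(S-s)$ at the designated uniform priority rank $p$ consists of: (i) the priority-restricted search that locates the affected subtree roots $L$ (c.f. Section~\ref{sec:segment-search}); (ii) the work of ``hanging out'' the affected subtrees and re-inserting them via the recursive primitives $\vPart$, $\vMerge$, and $\partition$; and (iii) the bookkeeping that is proportional to the total size of the affected subtrees themselves. Since these three phases are performed in sequence, their costs add, so it suffices to bound each by $\O(\log^2|S|)$.

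First I would dispatch phase (i) directly by citing Lemma~\ref{lem:segment-search}, which states that the expected time for the iterated ray-shooting search to find the affected nodes in a random TST over $(S-s)$ is $\O(\log^2|S|)$ for non-crossing segments. This is the whole point of that lemma, so no further work is needed here. Next, for phase (iii), I would invoke Lemma~\ref{lem:affected-subtrees}, which bounds the expected total size of affected subtrees by $\O\!\left(\log^2|S| + \tfrac{k_S}{|S|}\log|S|\right)$. Because $S$ is non-crossing we have $k_S = 0$, so this term collapses to $\O(\log^2|S|)$. This handles all cost that is linear in the size of the subtrees being moved or copied.

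The remaining and most delicate contribution is phase (ii): the recursive work performed by the primitives that rebuild the structure. Here I would argue that each affected subtree root in $L$, when re-inserted, triggers recursive calls of $\vPart$/$\vMerge$ (and $\partition$/$\merge$) whose total work is proportional to the number of nodes visited, and that each visited node lies in the neighborhood region $F = F_{S_{<p}}(s)$ of the inserted segment. The key observation is that every node touched by the recursion has region contained in this neighborhood zone and has priority exceeding $p$, so the total count is governed by exactly the zone-covering bound of Lemma~\ref{lem:bucketing}, summed over priority levels and over the uniform choice of $p$ — which is precisely the quantity already estimated in the proof of Lemma~\ref{lem:affected-subtrees}. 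Thus the recursive rebuilding cost is also subsumed by the $\O(\log^2|S|)$ bound on affected subtree size, up to a constant factor accounting for the (at most two) recursive calls per node noted for TSTs.

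The hard part will be making precise that the recursive primitives perform only a constant amount of extra work per node they visit, and that they visit no node outside the affected region — i.e. that the recursion does not ``escape'' the zone accounted for by Lemma~\ref{lem:bucketing}. The statement in Section~\ref{sec:algos} that $\vPart$ performs at most two recursive calls per node, together with the invariant that boundary cuts of an affected node have priority below $p$ while the node itself has priority above $p$, is what pins the recursion inside the neighborhood region; I would lean on the structural-induction correctness claim made for $\vPart$ to certify that the rebuilt structure equals $\T(S,\pi')$ and that the charging is valid. Once this containment is granted, the three bounds combine additively and the claimed $\O(\log^2|S|)$ expected insertion cost follows.
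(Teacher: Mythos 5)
Your decomposition---search cost via Lemma~\ref{lem:segment-search}, expected affected-subtree size via Lemma~\ref{lem:affected-subtrees} with $k_S=0$, and rebuild work linear in that size---is exactly the paper's proof, and your plan is correct. The only substantive difference is that the paper settles your flagged ``hard part'' not by re-invoking the zone covering of Lemma~\ref{lem:bucketing}, but by direct node counting through the stages: at most two $\vPart$ calls per affected root yield at most $3m$ nodes after slabbing (where $m$ is the total affected-subtree size), the $\partition$ calls on distinct slabs are independent and at most double this to $6m$, and $\vMerge$ cannot visit more nodes than were created, so the whole recursion is $\O(m)$.
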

\begin{proof}
Let $1\leq m \leq |\T|$ denote the total number of nodes within affected subtrees of $\T$, prior to the insertion call.
We argue that the total number of nodes visited by the insertion procedure is in $\O(m)$, which establishes the result based on Lemma~\ref{lem:affected-subtrees} and \ref{lem:segment-search} for non-crossing segments ($k_S=0$).

For each root of an affected subtree, we have at most $2$ calls of $\vPart$ to slab the subtree.
After the slabbing stage of the insertion, we have at most $3m$ nodes in total.

The edge $\partition$ calls on one slab are independent from those of another slab in $\T$.
Neglecting node removals due to $\vMerge$ briefly, the total number of temporarily created nodes of $\partition$ increases at most by a factor of $2$. 
Since we cannot remove more than what was created, the total number of nodes that $\vMerge$ may visit is bounded by $6m$.
\end{proof}

Given the duality of the update procedures, deletion of a $s \in S$ having priority rank $p(s)$ visits exactly the same nodes as insertion of $s$ among $(S-s)$ with priority rank $p(s)$.
Hence the expected node visits are equal as well.
Since our bound on expected update operations counts TST search nodes, 
we employ the result of Hemmer et al.~\cite{Hemmer16}, on the bijection between TST and TSD search paths, to obtain equal asymptotic bounds for the expected update costs in TSDs.

\section{Offline vs Online -- Maintaining Small Codes For Dynamic Orders}\label{sec:online-vs-offlie}
To support online emplacement and dropping of additional elements in a sequence $\pi \in \Perm(S)$, we use a more flexible representation than standard bijections $\pi : {S \to \{1,\ldots,|S|\}}$.
We represent sequences as injective mappings $\tau : S \to (0,1) \subseteq \R$.

If the values of $\tau$ are stored with the elements of $S$, we can evaluate the required $\pi(s) < \pi(s')$ predicates with comparison on $\tau$ in constant time.
However, dynamically extending $\tau$ for a new element $s \notin S$ such that $\tau(s)$ falls with equal probability in the intervals between $\{\tau(s'):s'\in S\}$ is challenging.
E.g. simple random sampling of $\tau(s)$ uniformly out of $(0,1)$ does not provide this. 
Since single registers of the pointer machine model are confined to numbers with only $\O(\log |S|)$ bits, simple interval halving strategies may well produce inefficiently long codes for $\tau$ as well.

We solve this problem using one additional randomized data structure to maintain orders among $n$ elements of a totally ordered set of `keys'.
Treaps are a conceptual fusion of Binary Search Trees (BSTs), over the nodes' key values, and (Min-)Heaps, over the nodes' priority values.
Insertions and deletions are performed similarly to BSTs and additionally standard tree rotations are used to maintain the heap property on the randomly chosen priority values.
Seidel and Aragon~\cite{SeidelA96} show not only that Treaps achieve $\O(\log n)$ depth with high probability but also that, even if the cost of a rotation is proportional to the whole subtree size, insertions and deletions still perform expected $\O(\log n)$ operations.

For our simple in-order numbering scheme in a fixed BST, we consider the injective mapping from root-to-node paths, which we read as smaller/larger strings over the alphabet $\{s,l\}$, to binary fractional number in $(0,1)\subseteq \Q$, which is a `$1$'-terminated string over the alphabet $\{0,1\}$.
More precisely, for the empty path string (the root node $r$) the associated value is $\gamma(r)=0.1_{(2)}$, $s$-edges are coded as $0$ and $l$-edges are coded as $1$.
E.g. the left child node $v_s$ of the root has code $\gamma(v_s)=0.01_{(2)}$, the right child node $v_l$ of the root has code $\gamma(v_l)=0.11_{(2)}$, and we have $\gamma(v_s) < \gamma(r) < \gamma(v_l)$.
On paths of a BST, the search tree property extends to the order of these $\gamma$ values of the nodes. This allows us to check if $u$ is before $v$ in the in-order sequence by comparing the values of $\gamma(u)$ and $\gamma(v)$.
Moreover, this recursive code definition can be assigned in a top-down fashion for all nodes in a subtree under $v$, once $\gamma(v)$ is assigned.

In order to maintain the proper values $\tau$ for a set $S$, we augment the Treap nodes to also store the total number of nodes in their subtree.
To emplace a new element, we first choose an integer $1 \leq p \leq |S|+1$ uniformly at random.
Next we use the standard Treap insertion to add a new node resembling the $p$-smallest key-value as a new leaf.
After the bottom-up re-balancing with rotations is finished, we simply re-visit the whole rotated subtree in a top-down fashion to overwrite the key value of each node $v$ with the new value $\gamma(v)$.
The deletion of an element is handled analogously.

To evaluate an order predicate between two elements, we simply compare the current key values due to the Treap, which are stored with the elements.
We summarize this with following statement.
\begin{obs}
	Representation of sequences $\pi \in \Perm(S)$ requires no more than $\O(|S|)$ space and dynamic updates take expected $\O(\log |S|)$ operations.
	After successful updates, evaluation of a $\pi(s)<\pi(s')$ predicate on $s,s' \in S$ takes $\O(1)$ operations.
\end{obs}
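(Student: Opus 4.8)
The plan is to verify the three asserted bounds separately, leaning on the Treap described above, where the augmented key stored with each element is the fractional code $\tau(s)=\gamma(v_s)$ and where the heap-priorities are fresh independent random values used only for balancing. I would first settle correctness together with the two easier bounds. By the search-tree property the in-order sequence of the Treap is exactly $\pi$, and the recursive definition of $\gamma$ is monotone along every root-to-node path; hence $\tau(s)<\tau(s')$ holds if and only if $\pi(s)<\pi(s')$, so the predicate is decided by a single comparison of the stored codes. The length of $\gamma(v_s)$ equals the depth of $v_s$, which by \cite{SeidelA96} is $\O(\log|S|)$ with high probability, so each code fits in a constant number of pointer-machine registers; this yields both the $\O(1)$ predicate evaluation and, summed over the $\O(|S|)$ Treap nodes (each holding $\O(1)$ further fields: heap-priority, subtree-size counter, and child pointers), the $\O(|S|)$ space bound.

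For an emplacement at a uniformly random position $p\in\{1,\dots,|S|+1\}$ I would split the work into three stages. Using the subtree-size augmentation as an order-statistics structure, locating the position of the $p$-smallest key costs $\O(\log|S|)$ in expectation. The new node is then inserted with a fresh random heap-priority and rotated up until heap order is restored, updating the $\O(1)$ local subtree-size counters at each rotation. Finally the codes are refreshed by re-descending the affected subtree top-down. The set of nodes whose code can change is exactly the final subtree of the newly inserted node: viewing the insertion as a split of that node's key-range, every node outside the range keeps its root-path and therefore its code, while every node inside is re-rooted and must be rewritten. The expected size of this subtree, over the random heap-priorities, is $\sum_{t\ge1}\O(1/t)=\O(\log|S|)$ (a node at key-distance $t$ joins the subtree only if the new node attains the minimum priority among the $t+1$ intervening elements), so the refresh costs expected $\O(\log|S|)$; dropping is handled symmetrically by the inverse sequence.

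The main obstacle is precisely this code-refresh step: a naive global re-numbering would cost $\Theta(|S|)$ and defeat the purpose of the scheme. The crux is therefore to confine the rewrite to the restructured subtree and to bound its size. This is exactly the regime analyzed by Seidel and Aragon~\cite{SeidelA96}, who show that even when each rotation is charged in proportion to the size of the subtree it moves, an insertion or deletion still performs expected $\O(\log|S|)$ operations; invoking their bound for our subtree-proportional refresh cost, and combining it with the $\O(\log|S|)$ location and rotation costs, closes the argument.
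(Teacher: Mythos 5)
Your proposal is correct and follows essentially the same route as the paper: a subtree-size-augmented Treap, the root-to-node path codes $\gamma$, a top-down rewrite of the codes confined to the rotated subtree after insertion/deletion, and Seidel--Aragon's bound that updates remain expected $\O(\log|S|)$ even when rotations are charged proportionally to subtree size. Your self-contained harmonic-sum bound on the expected size of the new node's final subtree is a nice addition the paper leaves implicit (it relies only on the cited rotation-cost result), and the only quibble is phrasing: the codes are not \emph{monotone} along root-to-node paths (left edges decrease, right edges increase); what you need, and what holds, is that the $\gamma$-order coincides with the in-order sequence.
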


\vspace{-.2cm}
\section{Implementation and Experiments}\label{sec:exp}
\vspace{-.2cm}
We implemented our fully-dynamic approach for TSTs based on the exact predicates and exact construction for line segments of the  Computational Geometry Algorithms Library~\cite{CGAL}.
The implementation\footnote{\url{https://github.com/milutinB/dynamic_trapezoidal_map_impl}} provides both standard RIC leaf-level insertions and our dynamic version.
In particular, it allows to test a RIC and dynamic TST for equivalence.

Our experiments focus on evaluating the practicality of the approach and the tightness of the analysis.
We use the standard RIC of TSTs as verification and baseline comparison.
To measure the performance of our segment search and dynamic updates, we count the total number of node visits during recursions of the update and the search.
To capture the asymptotic behavior in the experiments, we perform many insertion calls, with either ascending segment priority values (static TST) or a random shuffle of them (dynamic TST).

We created random sets of segments with varying numbers of intersections based on the 64-bit Mersenne Twister random number generator of the C++ Standard Library.
The experiment without intersections is comprised of horizontal segments, for which we first chose a $y$-coordinate uniformly (from the domain range) and then two $x$-coordinates uniformly (c.f. Figure~\ref{fig:exp-non}).
The experiment with few intersections are comprised of short segments, for which we chose uniformly a point, direction, and a length uniformly at random (on average $3\%$ of the domain boundary length; c.f. Figure~\ref{fig:exp-few}).
The experiment on many intersections are generated by choosing the coordinates of both endpoints uniformly at random (c.f. Figure~\ref{fig:exp-many}).

Figure~\ref{fig:dynamic-tree-vs-dag} shows the results on the segment search and recursive node visits of updates.
For ease of comparison, we also plot a function fit\footnote{We use the standard least-squares Marquardt-Levenberg algorithm of GNUPlot 5.2.} (thick blue lines) of the well known size and depth bounds (top rows) and our new segment search and update bounds (bottom rows) as overlay on the experimental data.
Note that only on the non-intersecting data set a comparison of function fit and bounds is meaningful.
We do however also provide it for the other experiments as a `trend line' that indicates the additional costs due to segment intersections.
To allow a better visual perception of the average cost per insertion, Figures~\ref{fig:exp-non} and \ref{fig:exp-few} show every $20$th data point and Figure~\ref{fig:exp-many} shows every $2$nd data point as impulse.

The experiment on non-intersecting segments matches with our analysis of segment search and update operations on TSTs.
As indicated by Lemma~\ref{lem:affected-subtrees}, TST updates on intersecting segments are more expensive.

\begin{figure}[p]
\begin{subfigure}[t]{\linewidth}
	\centering
\includegraphics[width=.4\linewidth]{./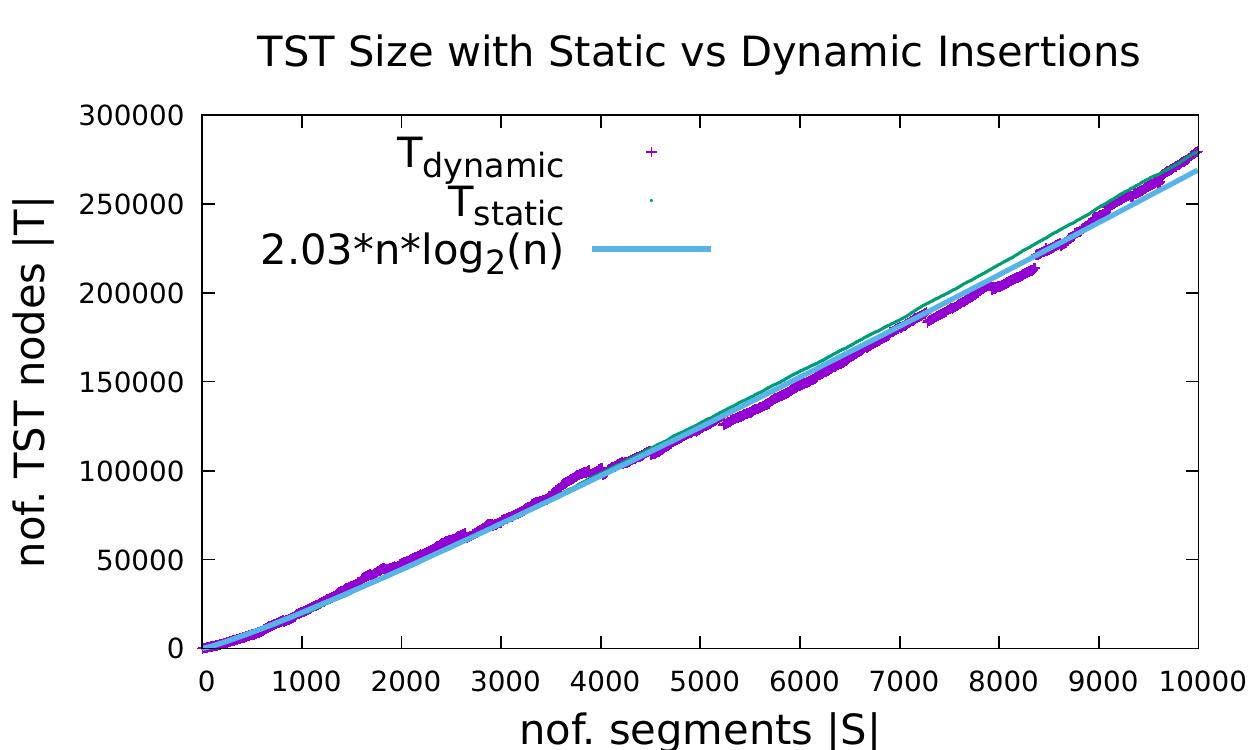}
\includegraphics[width=.4\linewidth]{./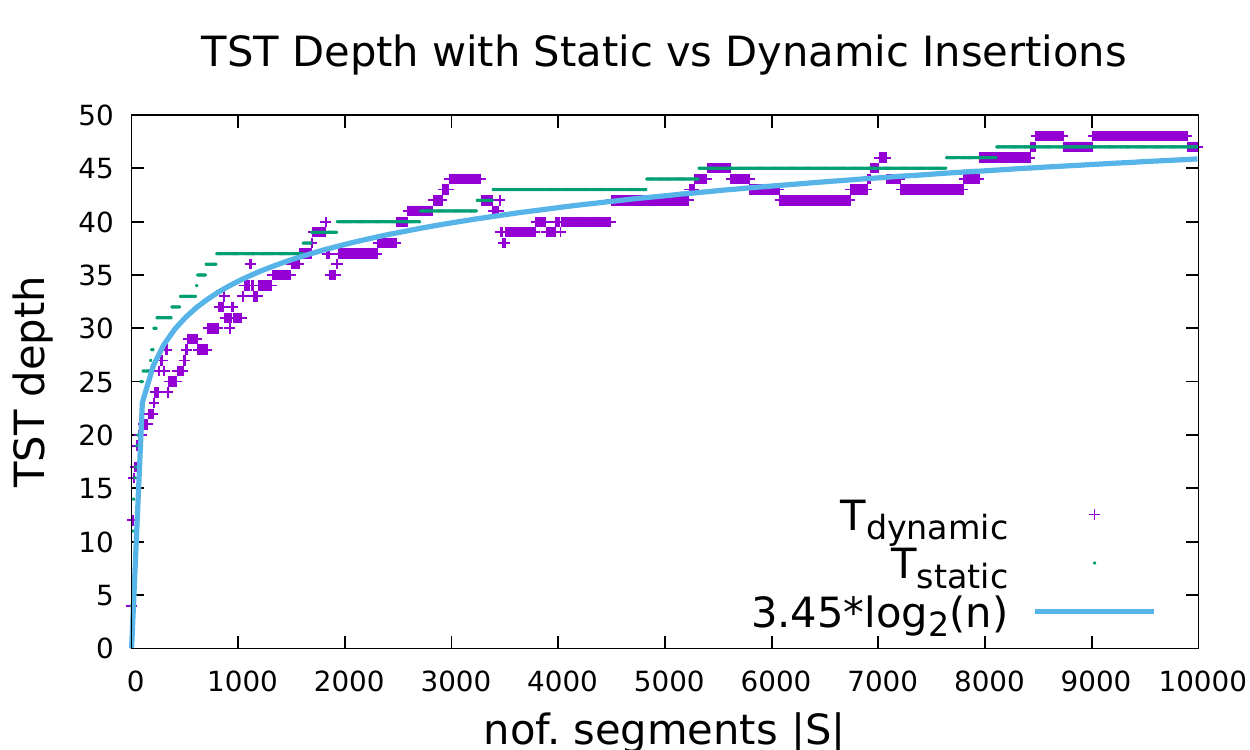}\\
\includegraphics[width=.4\linewidth]{./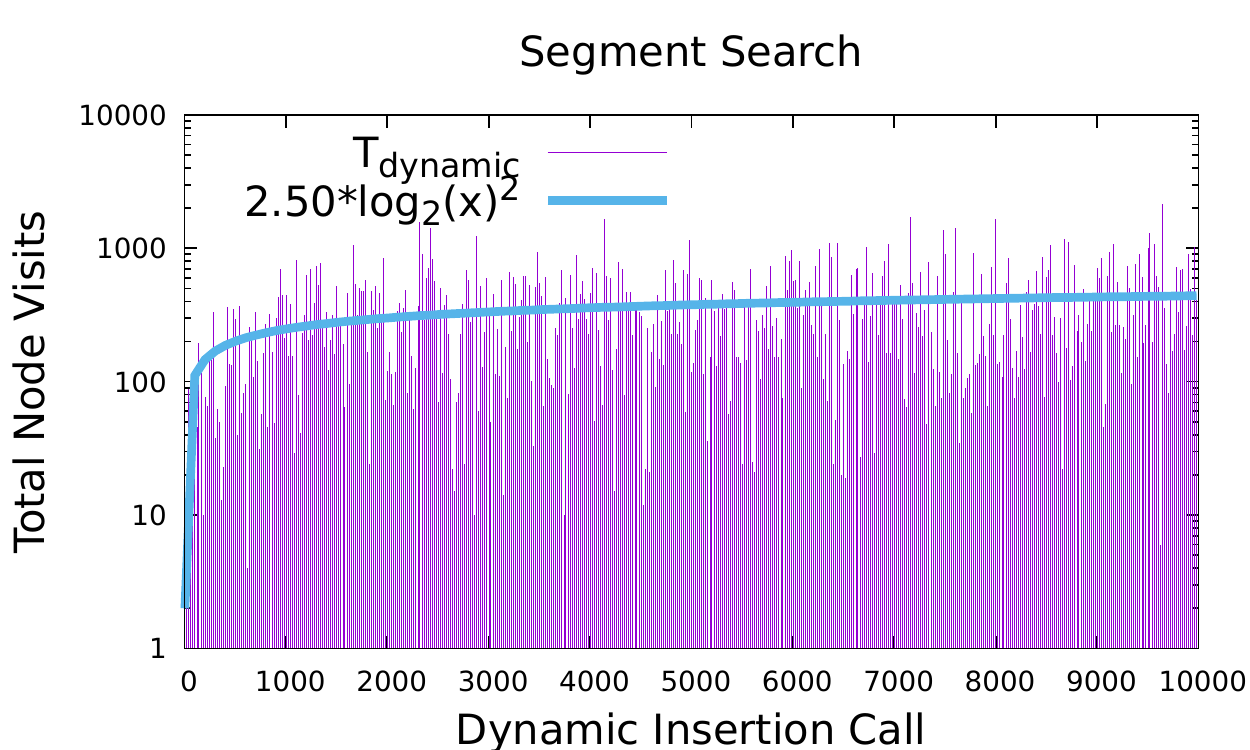}
\includegraphics[width=.4\linewidth]{./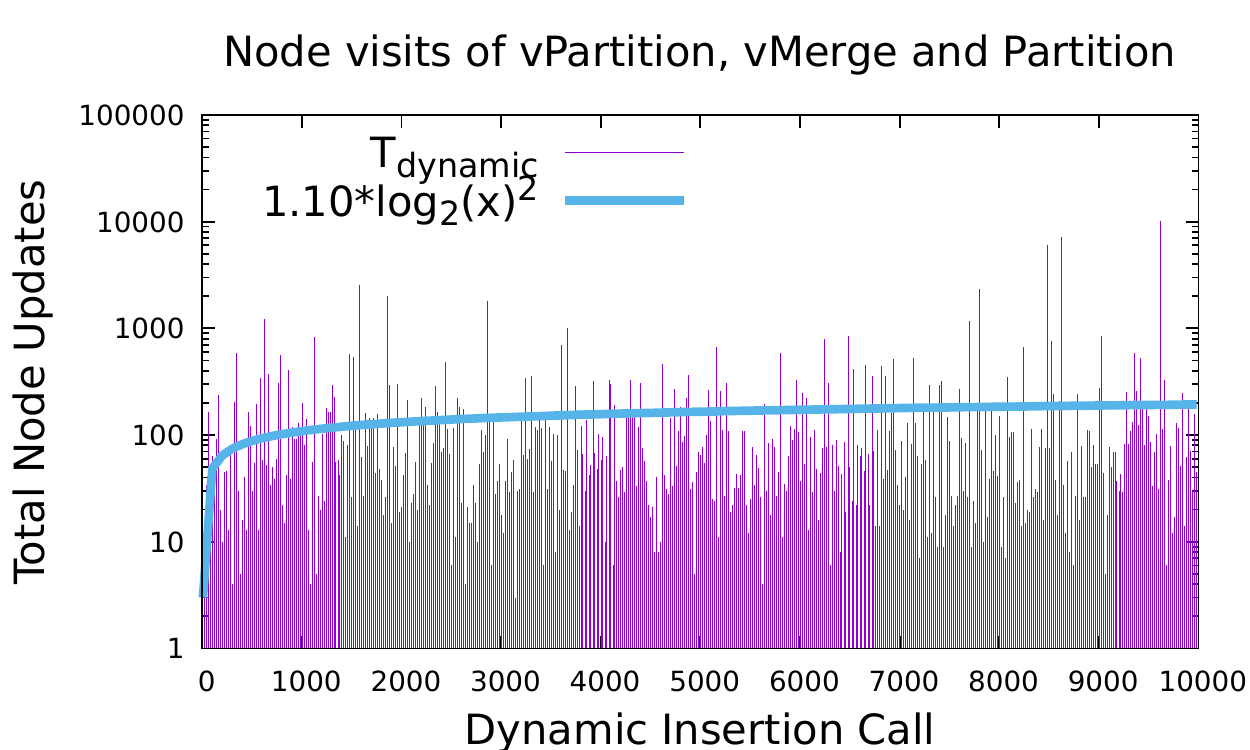}
\caption{Random non-intersecting segments ($|S|=10^4,~ k_S=0,~ k_S/|S|=0$).}
\label{fig:exp-non}
\end{subfigure}%

\begin{subfigure}[t]{\linewidth}
	\centering
	\includegraphics[width=.4\linewidth]{./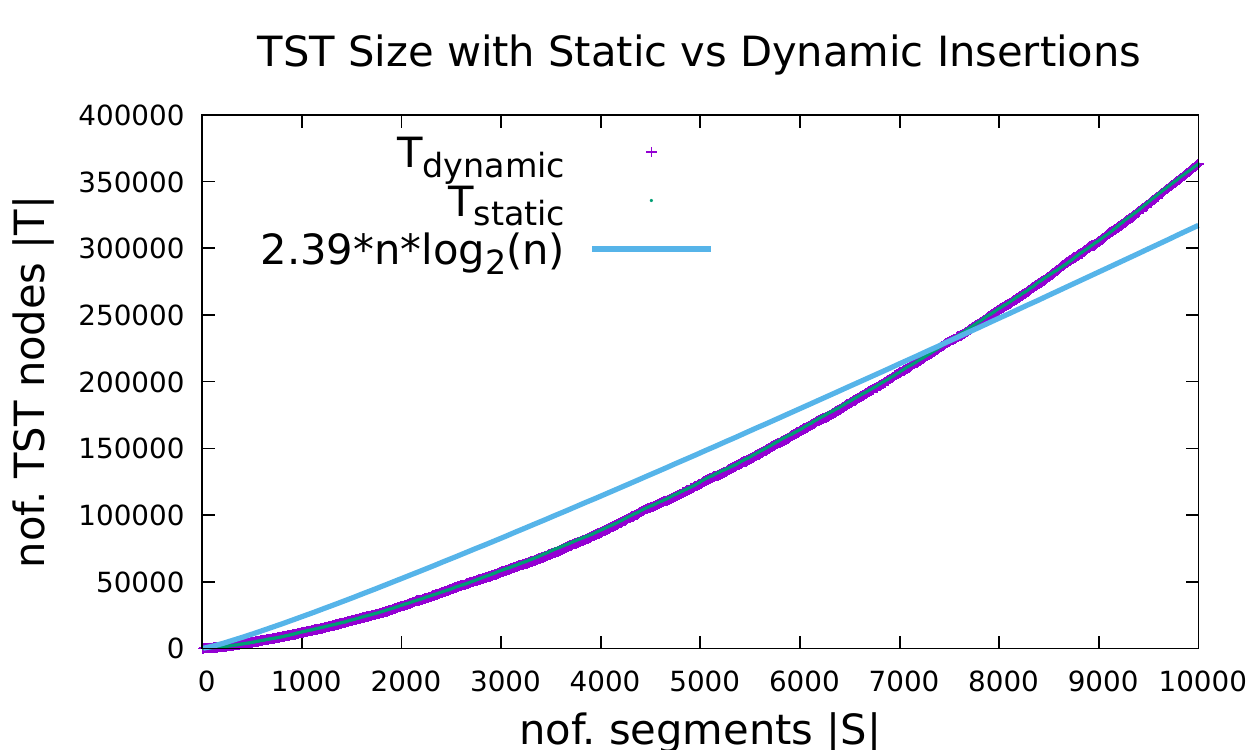}
	\includegraphics[width=.4\linewidth]{./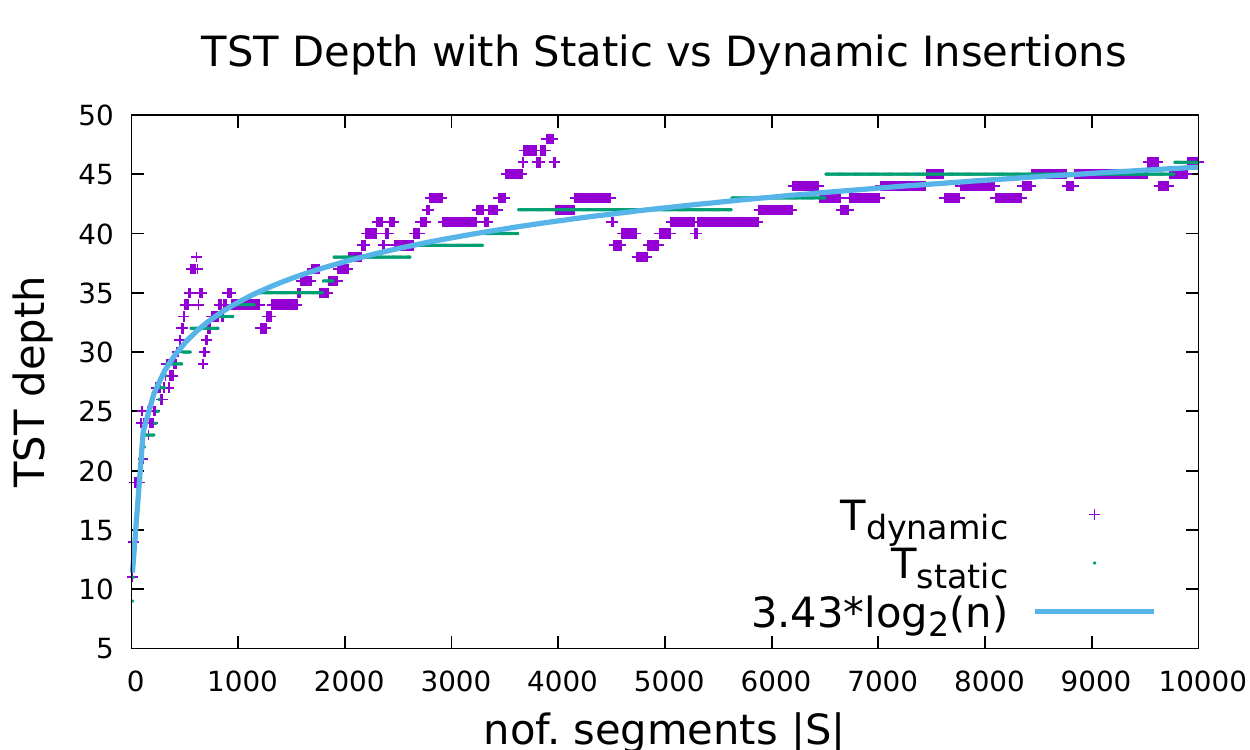}\\
	\includegraphics[width=.4\linewidth]{./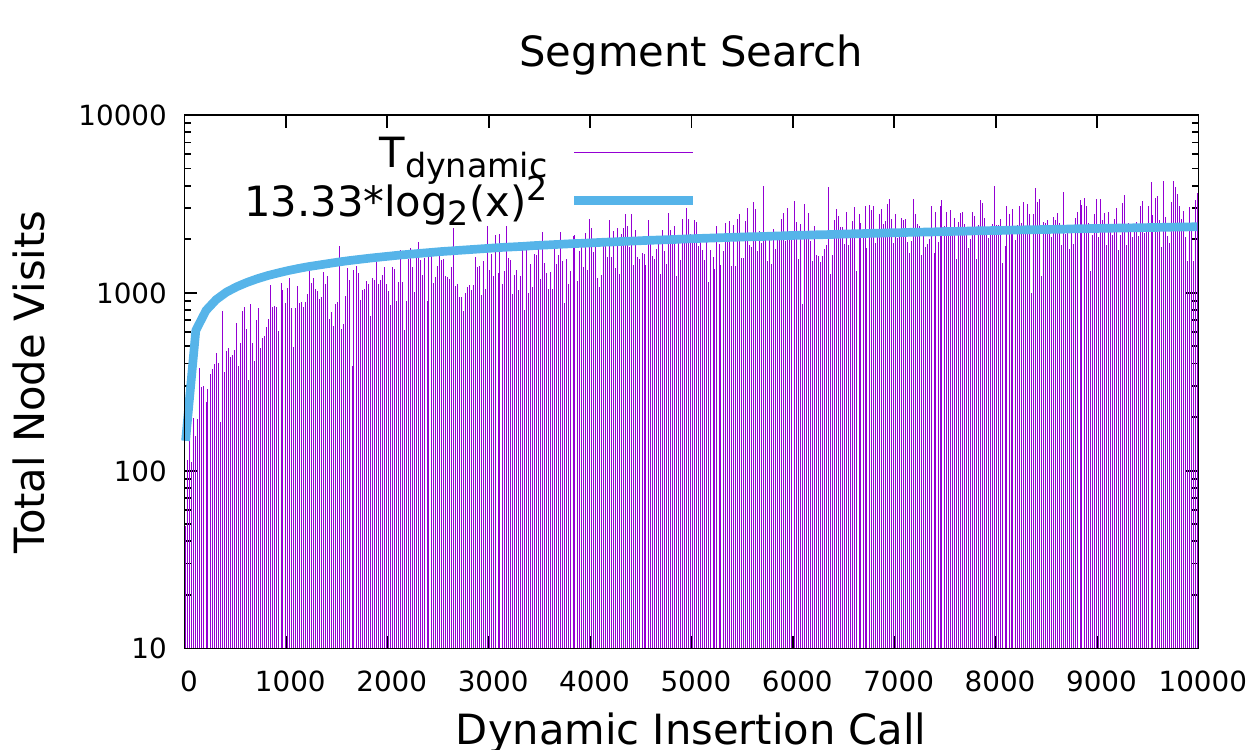}
	\includegraphics[width=.4\linewidth]{./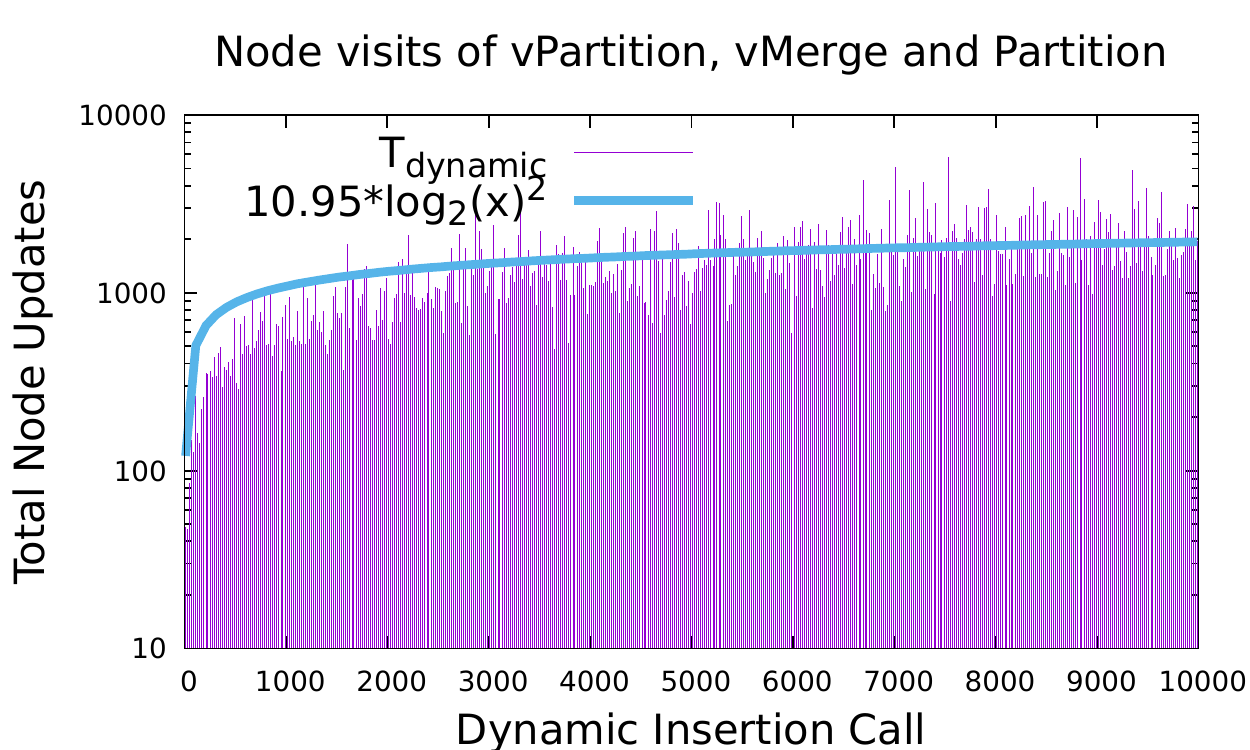}
	\caption{Random segments with few intersections ($|S|=10^4,~ k_S=35195,~ k_S/|S|\approx 3.5$).}
	\label{fig:exp-few}
\end{subfigure}%

\begin{subfigure}[t]{\linewidth}
	\centering
	\includegraphics[width=.4\linewidth]{./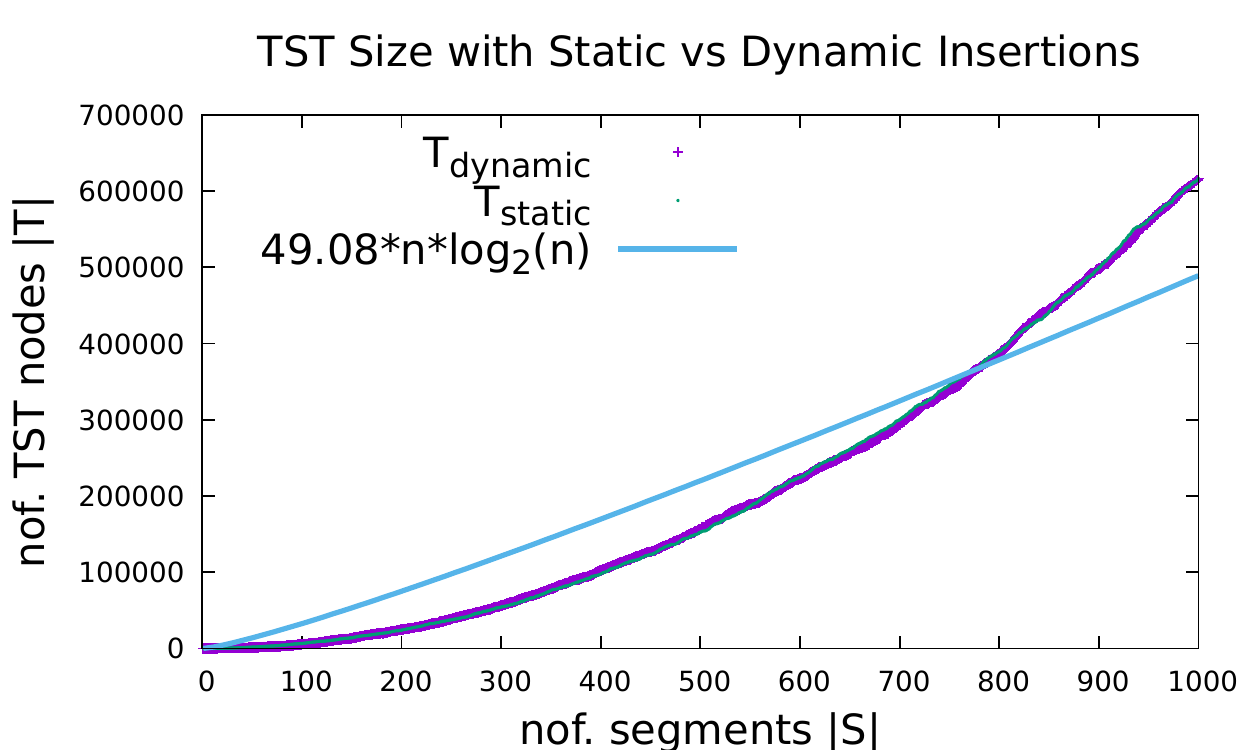}
	\includegraphics[width=.4\linewidth]{./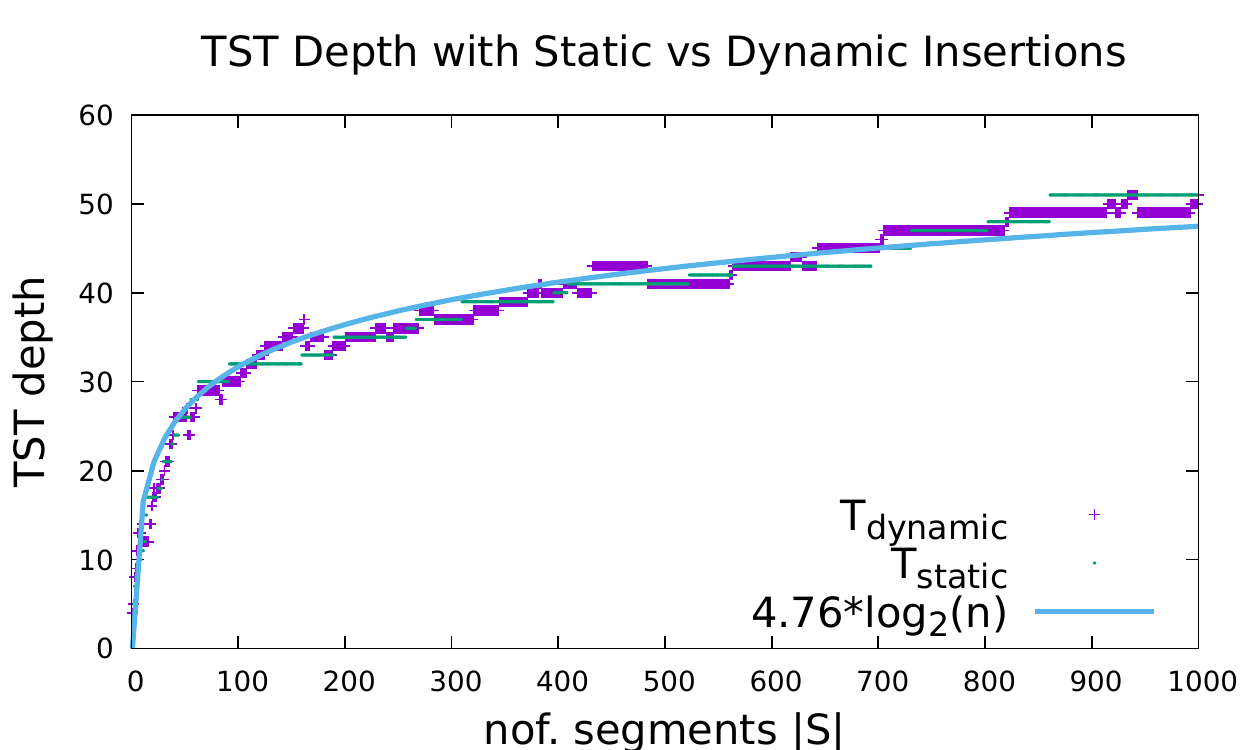}\\
	\includegraphics[width=.4\linewidth]{./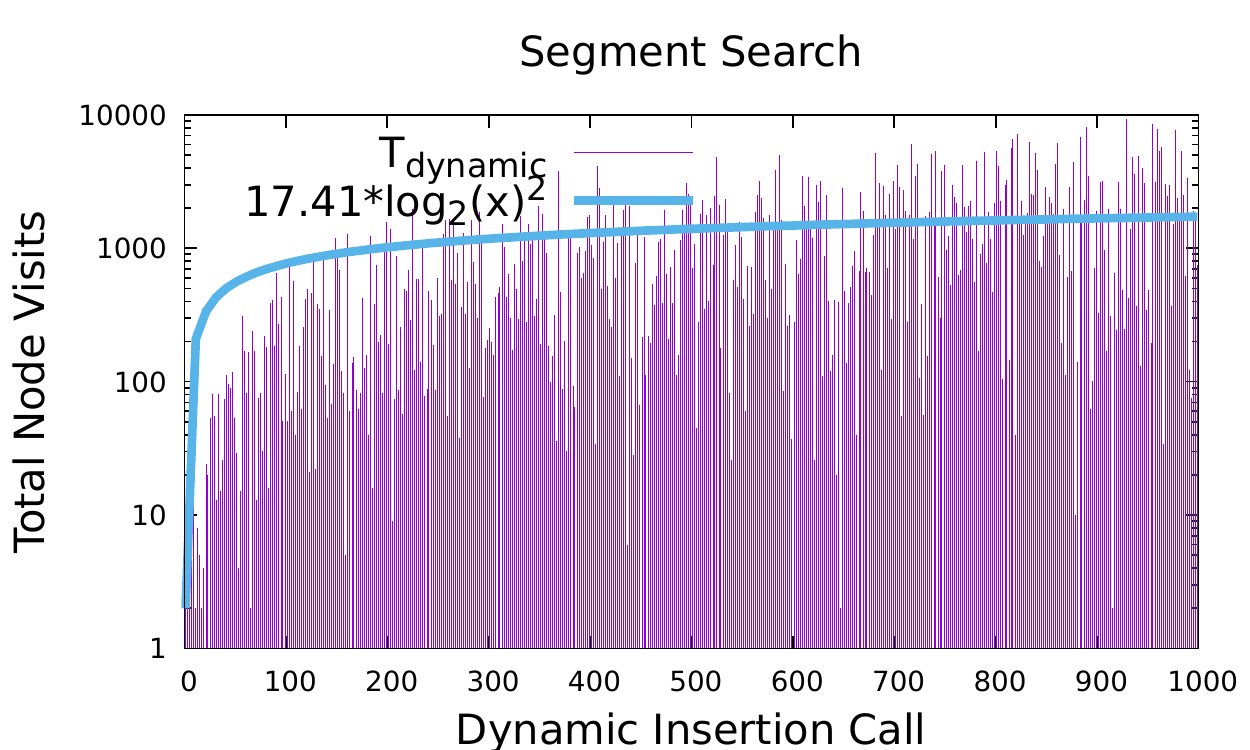}
	\includegraphics[width=.4\linewidth]{./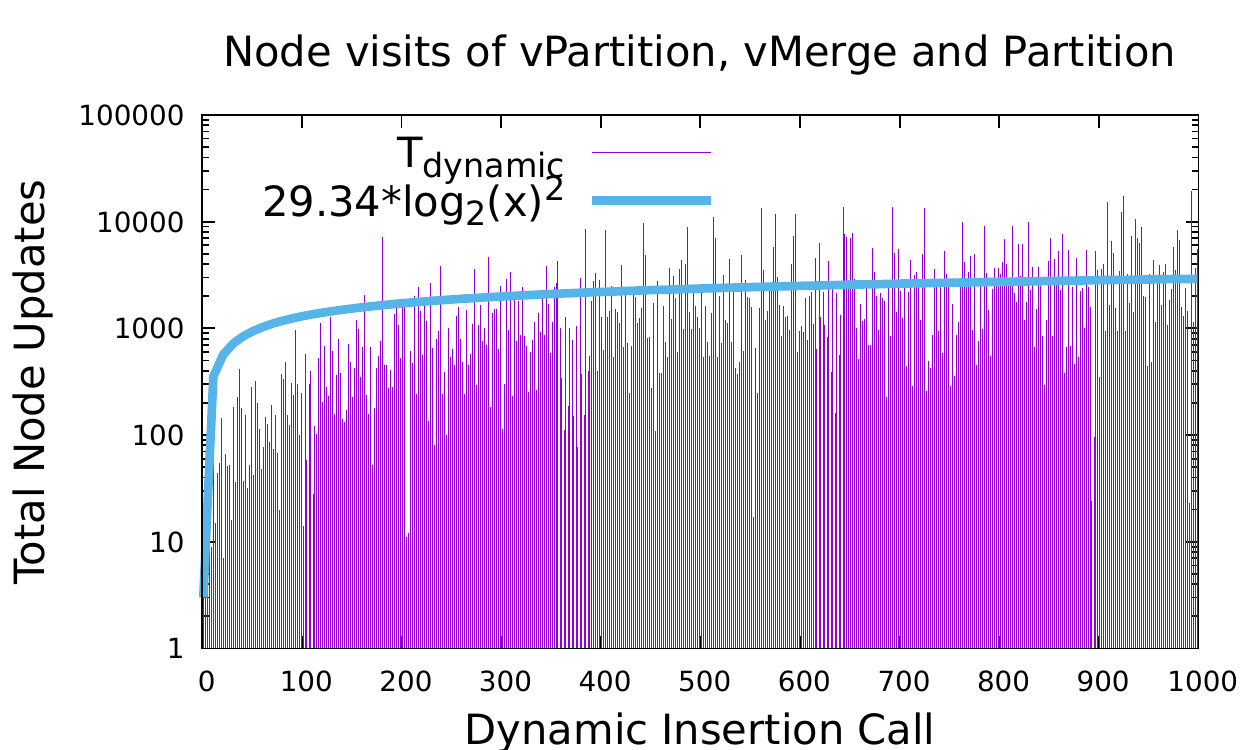}
	\caption{Random segments with many intersections ($|S|=10^3,~ k_S=120730,~ k_S/|S|\approx 120.7$).}
	\label{fig:exp-many}
\end{subfigure}%

\caption{Experimental results on TST size (top left), TST depth (top right), and node visits of the segment search (bottom left) and updates (bottom right) for RIC TSTs (green) and Dynamic TSTs (purple).
The blue curve indicates the respective function fit on the data of dynamic TSTs.
}\label{fig:dynamic-tree-vs-dag}
\end{figure}

\newpage
\bibliographystyle{plainurl}
\bibliography{references.bib}

\clearpage
\appendix

\section{Pseudocode for Recursive Primitives} \label{sec:code}

\includegraphics[width=\linewidth]{./figs/vpart-vmerge.pdf}

\begin{algorithm}[h!]
	\caption{$\vPart(u, q, v^-,v^+)$:\hfill Assertion: $c(q)$ intersects $\Delta(u)$.}
	\label{alg:v-part}
	\begin{itemize} 
		\item[] Stop if $u$ is a leaf; Let $(u_l,u_a,u_b,u_r) := \subtrees(u)$.
		\item[\em TST:] {\bf IF} $c(q)$ intersects $\Delta(u_a)$ and $\Delta(u_b)$
		\item[\em TSD:] {\bf IF} $q \in \Delta(u_a)$ \hfill `$q \in \Delta(u_b)$' analogue
		\begin{enumerate}
			\item[]
			If present, move $u_l$ and its parent's vertical cut in $v^-$ \\
			If present, move $u_r$ and its parent's vertical cut in $v^+$
			\item[] Cut both unoccupied leafs, that is one child of $v^-$ and one of $v^+$, with the edge cut. Let $v^-_a,v^-_b,v^+_a,v^+_b$ denote these leafs.
			\item[] 		$\vPart(u_a,q,v^-_a,v^+_a)$
			\item[\em TST:] $\vPart(u_b,q,v^-_b,v^+_b)$
			\item[\em TSD:] Set both below pointers on $u_b$ instead of $v^-_b$ and $v^+_b$.
		\end{enumerate}
		\item[] {\bf ELSE IF} $c(q)$ intersects $\Delta(u_l)$ \hfill `$c(q)$ intersects $\Delta(u_r)$' symmetrically
		\begin{enumerate}
			\item[] Move $u_r$ (if present) $,u_a,u_b$ and their parents' cuts in $v^+$.
			\item[] Let $v^+_l$ denote the unoccupied leaf.
			\item[] $\vPart(u_l,q,v^-,v^+_l)$ 
		\end{enumerate}
	\end{itemize} 
\end{algorithm}

\begin{algorithm}[h!]
	\caption{$\vMerge(u^-, u^+, q, v)$: \hfill Assertion: $c(q)$ bounds $\Delta(u^-)$ and $\Delta(u^+)$}
	\label{alg:v-merge}
	\begin{itemize} 
		\item[] {\bf IF} $u^-$ is leaf {\bf THEN} Move contents of $u^+$ in $v$ and stop. \hfill `$u^+$ leaf' analogue
		\item[] Let $(u^-_l,u^-_a,u^-_b,u^-_r) := \subtrees(u^-)$ and $(u^+_l,u^+_a,u^+_b,u^+_r) := \subtrees(u^+)$
		\item[\em TST:] {\bf IF} $p(u^-)=p(u^+)$ 
		\item[\em TSD:] {\bf IF} $p(u^-)=p(u^+)$ and $u^-_b=u^+_b$\hfill `$u^-_a=u^+_a$' analogue
		\begin{enumerate}
			\item[] If present, move $u^-_l$ and its parent's vertical cut in $v$
			\item[] If present, move $u^+_r$ and its parent's vertical cut in the unoccupied leaf of $v$
			\item[] Cut the new unoccupied leaf of $v$ with the edge cut. Let $v_a,v_b$ denote these leafs.
			\item[] 	    $\vMerge(u^-_a,u^+_a,q,v_a)$ 
			\item[\em TST:] $\vMerge(u^-_b,u^+_b,q,v_b)$
			\item[\em TSD:] Set the below pointer on $u^-_b$ instead of $v_b$.
			
		\end{enumerate}
		\item[] {\bf ELSE IF} $p(u^-)<p(u^+)$ \hfill `$>$' symmetrically
		\begin{enumerate}
			\item[] Move $u^-_l$ (if present) $, u^-_a, u^-_b$ and their parents' cuts in $v$.
			\item[] Let $v_r$ denote the unoccupied leaf.
			\item[] $\vMerge(u^-_r,u^+,q,v_r)$ 
		\end{enumerate}
	\end{itemize} 
\end{algorithm}

\begin{algorithm}[h!]
	\caption{$\partition(u,c,v^-,v^+)$:\hfill Assertion: $c$ crosses $\Delta(u)$ entirely}
	\label{alg:partition}
	
	\includegraphics[width=\linewidth]{./figs/partition.pdf}
\\
	\begin{enumerate}
		
		\item[] Stop if $u$ is a leaf
		\item[] Let $(u_l,u_a,u_b,u_r) := \subtrees(u)$ and $l_1,l_2$ denote the vertical cuts destroying $\Delta(u)$
		
		\item[]{\bf IF} $c$ does not intersect $\Delta(u_b)$ properly \hfill `not $\Delta(u_a)$' analogue
		\begin{itemize}
			\item[] Allocate new nodes $v^-_l,v_l^+,v_a^-,v_a^+, v_r^-, v_r^+, v^+_{ar}$.
			
			\item[] $\partition(u_l, c, v_l^-,v_l^+)$; $\partition(u_a, c,v^-_a,v^+_a)$; $\partition(u_r, c,v^-_r,v^+_r)$
			
			\item[] $\vMerge(v^+_a, v^+_r, l_2, v^+_{ar})$\\
			$\vMerge(v^+_l, v^+_{ar}, l_1, v^+)$
			\item[] Place $v^-_l, v^-_a, u_b, v^-_r$ below respective cuts under $v^-$
		\end{itemize}
		
		\item[] {\bf ELSE IF} edge cut of $u$ is steeper than $c$ \hfill `less steep' symmetrically
		\begin{itemize}
			\item[] Let $i$ denote the vertical cut (induced by the intersection of the edge-cut and $c$)
			\item[] Allocate new nodes $v^-_l,v_l^+, v_r^-, v_r^+$ as well as  $v_{ar},v_{al},v_{al}^-, v_{al}^+$ and
			$v_{bl},v_{br},v_{br}^-, v_{br}^+$.

			\item[] $\partition(u_l, c, v^+_l, v^-_l)$;
					$\partition(u_r, c, v^+_r,v^-_r)$

			\item[] $\vPart(u_a,i, v_{al},v_{ar})$;
			        $\vPart(u_b,i, v_{bl},v_{br})$
						
			\item[] $\partition(v_{al}, c, v^-_{al},v^+_{al})$; 
					$\partition(v_{br}, c, v^-_{br},v^+_{br})$\\
			
			\item[] Place $\vMerge(v^+_l, v^+_{al}, l_1, \cdot )$ as left child under cut $i$ below $v^+$\\
			Place $v^+_r$ and its parent's cut in the unoccupied leaf of $v^+$\\
			Cut the unoccupied leaf of $v^+$ with the edge cut of $u$\\
			Place $v_{ar}$ and $v^+_{br}$ in these leafs\\
			
			\item[] Place $v^-_l$ and its parent's  cut in $v^-$\\
			Place $\vMerge(b^-_{br}, v^-_r, l_2, \cdot)$ as right child under cut $i$ in $v^-$\\
			Cut the unoccupied leaf of $v^-$ with the edge cut of $u$\\
			Place $v^-_{al}$ and $v_{bl}$ in these leafs
			
		\end{itemize}

	\end{enumerate}
\end{algorithm}

\begin{algorithm}[h!]
	\includegraphics[width=\linewidth]{./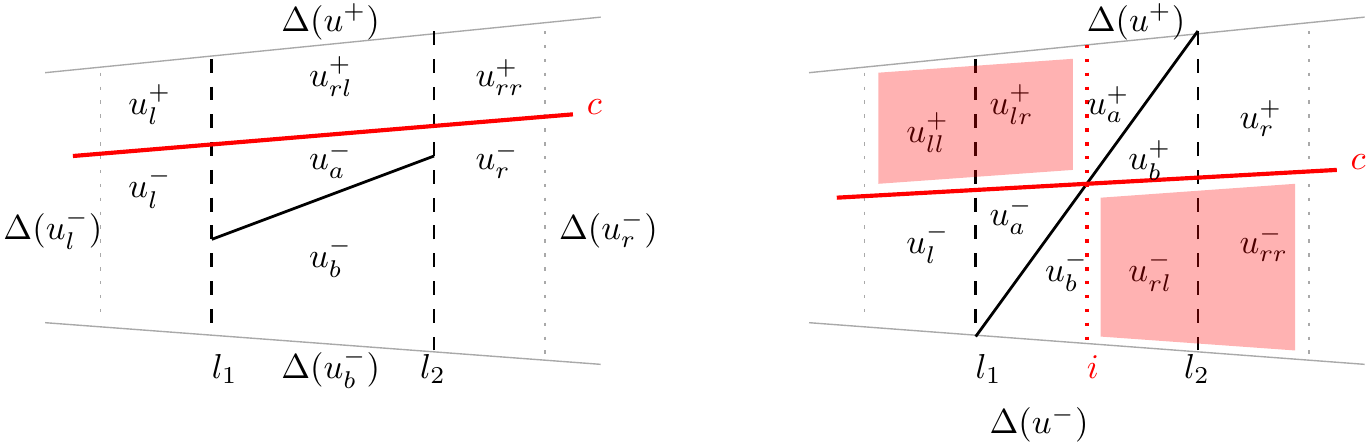}

	\caption{$\merge(u^-,u^+,c,v)$:\hfill Assertion: $c$ bounds $\Delta(u^-)$ and $\Delta(u^+)$}
	\label{alg:merge}

	\begin{enumerate} 
		\item[] {\bf IF} $u^-$ is leaf {\bf THEN} Move contents of $u^+$ in $v$ and stop. \hfill `$u^+$ leaf' analogue

		\item[]{\bf IF} $p(u^-) < p(u^+)$  \hfill `$>$' symmetrically
	\begin{enumerate}
		\item[] Let $(u^-_l,u^-_a,u^-_b,u^-_r) := \subtrees(u^-)$ and $l_1,l_2$ denote the vertical cuts of $u^-$
		\item[] Move the cuts of $u^-$ in $v$ and place $u^-_b$ below the edge-cut;  
		Let $v_l,v_a,v_r$ denote the unoccupied leafs
		\item[] $\vPart(u^+,l_1,u^+_l,u^+_r)$
		\item[] $\vPart(u^+_r,l_2,u^+_{rl},u^+_{rr})$
		\item[] $\merge(u^-_l, u^+_l,c,v_l)$; $\merge(u^-_r, u^+_{rr},c,v_r)$; $\merge(u^-_a, u^+_{rl},c,v_a)$
	\end{enumerate}
		\item[]{\bf ELSE IF} $p(u^-)=p(u^+)$ and edge cut is steeper than $c$ \hfill`less steep' symmetrically
		\begin{enumerate}

		\item[] Let $(u^-_l,u^-_a,u^-_b,u^-_r) := \subtrees(u^-)$ and $(u^+_l,u^+_a,u^+_b,u^+_r) := \subtrees(u^+)$
		\item[] Let $l_1$ be the vertical cut of $u^-$, $l_2$ of $u^+$, and $i$ their intersection cut
		\item[] Cut $v$ with $l_1,l_2$ and the (common) edge cut of $u^-$ and $u^+$. Let $v_l,v_a,v_b,v_r$ denote the unoccupied leafs
		
		\item[] $\vPart(u^+_l,l_1,u^+_{ll},u^+_{lr})$
		\item[] $\vPart(u^-_r,l_2,u^-_{rl},u^-_{rr})$
		
		\item[] $\merge(u^-_l,u^+_{ll}, c, v_l)$; $\merge(u^-_{rr},u^+_r, c, v_r)$\\
		
		\item[] $\merge(u^+_b,u^-_{rl},c,v_{br})$; $\merge(u^-_a,u^+_{lr},c,v_{al})$
		\item[] $\vMerge(v_{al},u^+_a,i,v_a)$; $\vMerge(u^-_b,v_{br},i,v_b)$;
\end{enumerate}
	\end{enumerate}
\end{algorithm}

\end{document}